\newtheorem{theorem}{Theorem}
\newtheorem{lemma}{Lemma}
\newtheorem{proposition}{Proposition}
\newtheorem{definition}{Definition}
\newtheorem{problem}{Problem}
\newcommand{\Null}{\mathrm{Null}}
\newcommand{\Range}{\mathrm{Range}}
\newcommand{\one}{\mathbf{1}}
\newcommand{\rank}{\mathrm{rank}}
\newcommand{\myspan}{\mathrm{span}}
\newcommand{\mydiag}{\mathrm{diag}}
\newcommand{\blkdiag}{\mathrm{blkdiag}}
\newcommand{\T}{\mathrm{T}}
\newcommand{\R}{\mathbb{R}}
\newcommand{\G}{\mathcal{G}}
\newcommand{\E}{\mathcal{E}}
\newcommand{\V}{\mathcal{V}}
\newcommand{\N}{\mathcal{N}}
\newcommand{\RB}{R_{\text{\scriptsize{B}}}}
\newcommand{\FB}{F_{\text{\scriptsize{B}}}}
\newcommand{\LB}{L_{\text{\scriptsize{B}}}}
\begin{document}

\title{Bearing-Based Formation Stabilization with Directed Interaction Topologies}
\author{Shiyu Zhao and Daniel Zelazo
\thanks{S. Zhao is with the Department of Mechanical Engineering, University of California, Riverside, USA. {\tt\small shiyuzhao@engr.ucr.edu}}
\thanks{D. Zelazo is with the Faculty of Aerospace Engineering, Technion - Israel Institute of Technology, Haifa, Israel. {\tt\small dzelazo@technion.ac.il}}
}
\IEEEoverridecommandlockouts
\maketitle
\begin{abstract}
This paper studies the problem of stabilizing target formations specified by inter-neighbor bearings with relative position measurements.
While the undirected case has been studied in the existing works, this paper focuses on the case where the interaction topology is \emph{directed}.
It is shown that a linear distributed control law, which was proposed previously for undirected cases, can still be applied to the directed case.
The formation stability in the directed case, however, relies on a new notion termed \emph{bearing persistence}, which describes whether or not the directed underlying graph is persistent with the bearing rigidity of a formation.
If a target formation is not bearing persistent, undesired equilibriums will appear and global formation stability cannot be guaranteed.
The notion of bearing persistence is defined by the bearing Laplacian matrix and illustrated by simulation examples.
\end{abstract}
\overrideIEEEmargins

\section{Introduction}

In recent years, there has been a growing research interest on bearing-based formation control \cite{bishop2010SCL,bishopconf2011rigid,Eren2012IJC,Eric2014ACC,zhao2014TACBearing,zhao2015ECC,zhao2015MSC,zelazo2015CDC} and bearing-based network localization \cite{BishopTAES2009,Piovan2013Automatica,Shames2013TAC,ZhuGuangwei2014Automatica,zhao2015NetLocalization}.
The bearing-based approaches provide solutions to control or localize networks in arbitrary dimensional spaces merely using the inter-neighbor bearing information.
As shown in \cite{zhao2015MSC}, the bearing-based approach also provides a simple solution to formation scale control, which is complicated to solve using the distance or relative position based approaches \cite{Coogan2012Scale,Park2014IJRNC}.
While most of the existing works focused on the \emph{undirected} case, this paper focuses on bearing-based formation control with \emph{directed} interaction topologies.

The specific problem considered in this paper is described below.
Suppose the target formation is static and its desired geometric pattern is defined by the inter-neighbor bearing constraints. Each agent can obtain the relative positions of its neighbors via wireless communication.
The control objective is to steer the formation from an initial configuration to a final configuration that satisfies the inter-neighbor bearing constraints.
The linear formation control law considered in this paper is the one proposed in our previous work \cite{zhao2015ECC} for formations with undirected interaction topologies.
It will be shown in this paper that the control law can also be applied to the directed case, but the formation stability in the directed case is more complicated to analyze.

In the undirected case, the formation stability merely relies on the bearing rigidity properties of the target formation \cite{zhao2015ECC}.
The bearing rigidity of networks or formations has been studied in \cite{bishopconf2011rigid,Eren2012IJC,zhao2014TACBearing,zelazo2015CDC}.
It is shown in \cite{zhao2014TACBearing} that the geometric pattern of a target formation can be uniquely determined by the inter-neighbor bearings if and only if the target formation is infinitesimally bearing rigid.
It is further shown in \cite{zhao2015ECC} that the global formation stability will be guaranteed if the target formation is infinitesimally bearing rigid.
However, in the directed case, the formation stability will rely on not only the bearing rigidity properties but also a new notion termed \emph{bearing persistence} as observed in this paper.
The notion of bearing persistence is defined based on a special matrix, termed the \emph{bearing Laplacian}, which plays important roles in the bearing-based formation control and network localization problems as shown in \cite{zhao2015ECC,zhao2015MSC,zhao2015NetLocalization}.
The bearing Laplacian of a formation can be viewed as a matrix-weighted graph Laplacian.
In the undirected case, the bearing Laplacian is symmetric positive semi-definite and its null space is exactly the same as that of the bearing rigidity matrix.
However, the bearing Laplacian may lose these elegant properties in the directed case.

The work presented in this paper is a first step towards solving the general problem of bearing-based formation control with directed interaction topologies.
The notion of bearing persistence is defined and explored.
Specifically, a formation is bearing persistent when the null spaces of the bearing Laplacian and the bearing rigidity matrix are the same.
We further show that if the formation is not bearing persistent, undesired equilibriums will appear and the global formation stability cannot be guaranteed.
We also present a sufficient condition to ensure bearing persistence for formations in the two-dimensional space.
The spectrum of the bearing Laplacian in the directed case, which determines the formation stability, is however not analyzed in this paper and will be addressed in the future.
Finally, since the bearing-based formation control problem is mathematically equivalent to the bearing-only network localization problem as shown in \cite{zhao2015ECC}, the results presented in this paper can be easily transferred to solve the bearing-only network localization problem with directed interaction topologies.

\paragraph*{Notations}
Given $A_i\in\mathbb{R}^{p\times q}$ for $i=1,\dots,n$, denote $ \mydiag(A_i)\triangleq\blkdiag\{A_1,\dots,A_n\}\in\mathbb{R}^{np\times nq}$.
Let $\Null(\cdot)$ and $\Range(\cdot)$ be the null space and range space of a matrix, respectively.
Denote $I_d\in\R^{d\times d}$ as the identity matrix, and $\one\triangleq[1,\dots,1]^\T$.
Let $\|\cdot\|$ be the Euclidian norm of a vector or the spectral norm of a matrix, and $\otimes$ be the Kronecker product.

\section{Preliminaries to Bearing Rigidity Theory}

Bearing rigidity theory plays a key role in the analysis of bearing-based distributed control and estimation problems.
In this section, we revisit some notions and results in the bearing rigidity theory \cite{zhao2014TACBearing}.
It is worth noting that the bearing rigidity theory presented in \cite{zhao2014TACBearing} is developed for the case of undirected graphs. However, it can be generalized to the directed case without any difficulties.

We first introduce an orthogonal projection matrix operator that will be used throughout the paper.
For any nonzero vector $x\in\R^d$ ($d\ge2$), define $P: \R^d\rightarrow\R^{d\times d}$ as
\begin{align*}
    P(x) \triangleq I_d - \frac{x}{\|x\|}\frac{x^\T }{\|x\|}.
\end{align*}
For notational simplicity, we denote $P_x=P(x)$.
The matrix $P_x$ geometrically projects any vector onto the orthogonal compliment of $x$.
It is easily verified that $P_x^\T =P_x$, $P_x^2=P_x$, and $P_x$ is positive semi-definite.
Moreover, $\Null(P_x)=\myspan\{x\}$ and the eigenvalues of $P_x$ are $\{0,1^{(d-1)}\}$.
Any two nonzero vectors $x,y\in\R^d$ are parallel if and only if $P_xy=0$ (or equivalently $P_yx=0$).

Given a finite collection of $n$ points $\{p_i\}_{i=1}^n$ in $\R^d$ ($n\ge2$, $d\ge2$), denote $p=[p_1^\T,\dots,p_n^\T]^\T\in\mathbb{R}^{dn}$.
A \emph{formation} in $\R^d$, denoted as $\G(p)$, is a directed graph $\G=(\V,\E)$ together with $p$, where vertex $i\in\V$ in the graph is mapped to the point $p_i$.
For a formation $\G(p)$, define the \emph{edge vector} and the \emph{bearing}, respectively, as
\begin{align*}
e_{ij}\triangleq p_j-p_i, \quad g_{ij}\triangleq e_{ij}/\|e_{ij}\|, \quad \forall(i,j)\in\E.
\end{align*}
The bearing $g_{ij}$ is a unit vector.

\begin{definition}[Bearing Equivalence]\label{definition_bearingEquivalence}
Two formations $\G(p)$ and $\G(p')$ are \emph{bearing equivalent} if $P_{g_{ij}}g'_{ij}=0$ for all $(i,j)\in\E$.
\end{definition}

By Definition~\ref{definition_bearingEquivalence}, bearing equivalent formations have parallel inter-neighbor bearings.
Suppose $|\E|=m$ and index all the directed edges from $1$ to $m$.
Re-express the edge vector and the bearing as $e_{k}$ and $g_{k}\triangleq {e_{k}}/{\|e_{k}\|}$, $\forall k\in\{1,\dots,m\}$.
Let $e=[e_1^\T ,\dots,e_m^\T ]^\T$ and $g=[g_1^\T ,\dots,g_m^\T ]^\T$.
Note $e$ satisfies $e=\bar{H}p$ where $\bar{H}=H\otimes I_d$ and $H$ is the incidence matrix of the graph $\mathcal{G}$ \cite{BookGodsilGraph}.
Define the \emph{bearing function} $F_B: \R^{dn}\rightarrow\R^{dm}$ as
\begin{align*}
    \FB(p)\triangleq [g_1^\T, \dots,g_m^\T]^\T.
\end{align*}
The bearing function describes all the bearings in the formation.
The \emph{bearing rigidity matrix} is defined as the Jacobian of the bearing function,
\begin{align}\label{eq_rigidityMatrixDefinition}
    \RB(p) \triangleq \frac{\partial \FB(p)}{\partial p}\in\R^{dm\times dn}.
\end{align}
Let $\delta p$ be a variation of $p$.
If $\RB(p)\delta p=0$, then $\delta p$ is called an \emph{infinitesimal bearing motion} of $\G(p)$.
Any formation always has two kinds of \emph{trivial} infinitesimal bearing motions: translation and scaling of the entire formation.
We next define one of the most important concepts in bearing rigidity theory.

\begin{definition}[{Infinitesimal Bearing Rigidity}]\label{definition_infinitesimalParallelRigid}
    A formation is \emph{infinitesimally bearing rigid} if all the infinitesimal bearing motions of the formation are trivial.
\end{definition}

Some useful results are listed as below; for proofs, please refer to \cite{zhao2014TACBearing}.

\begin{lemma}[\cite{zhao2014TACBearing}]\label{lemma_BearingRigidityProperty}
For any formation $\G(p)$, the bearing rigidity matrix defined in \eqref{eq_rigidityMatrixDefinition} satisfies
\begin{enumerate}[(a)]
\item $\RB(p)= \mydiag\left({P_{g_k}}/{\|e_k\|}\right)\bar{H}$;
\item $\rank(\RB)\le dn-d-1$ and $\myspan\{\one\otimes I_d, p\}\subseteq \Null(\RB)$.
\end{enumerate}
\end{lemma}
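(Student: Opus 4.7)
The plan is to first establish the factorization in (a) by a direct chain-rule computation, and then use (a) to verify each of the null-space inclusions in (b), from which the rank bound follows by a simple dimension count.

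For part (a), I will compute $\partial g_k/\partial p$ for a single directed edge $k=(i,j)$ and then stack the results. Writing $g_k = e_k/\|e_k\|$ with $e_k = p_j - p_i$, the chain rule gives $\partial g_k/\partial p = (\partial g_k/\partial e_k)(\partial e_k/\partial p)$. A direct differentiation of $e_k/\|e_k\|$ yields $(I_d - g_k g_k^\T)/\|e_k\| = P_{g_k}/\|e_k\|$, while $\partial e_k/\partial p$ is exactly the $k$-th block row of $\bar{H} = H \otimes I_d$ by the definition of the incidence matrix. Assembling the blocks over all $m$ edges produces the claimed factorization.

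For part (b), I verify the two generators of the putative null space separately, using the factorization from (a). For any constant $v \in \R^d$, $\bar{H}(\one \otimes v) = (H\one) \otimes v = 0$ since the row sums of the incidence matrix vanish, so every column of $\one \otimes I_d$ lies in $\Null(\bar{H}) \subseteq \Null(\RB)$. For $p$ itself, I use $\bar{H}p = e$ and observe that the $k$-th block of $\RB(p)p$ is $(P_{g_k}/\|e_k\|)e_k = P_{g_k}g_k$, which vanishes because $P_x x = 0$ by the projector identity noted just before the lemma. Hence $\myspan\{\one \otimes I_d, p\} \subseteq \Null(\RB)$.

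The rank bound $\rank(\RB) \le dn - d - 1$ then follows as soon as the $d+1$ vectors obtained from the columns of $\one \otimes I_d$ together with $p$ are linearly independent, i.e., provided $p$ is not a pure translate of a single point. This is implicit in the setup, since all edge vectors $e_k$ must be nonzero for the bearings $g_k$ to be well defined, and so at least one pair of points is distinct. No step of the argument is genuinely difficult; the only modest piece of bookkeeping is confirming that $p \notin \myspan\{\one \otimes I_d\}$, which the non-degeneracy of the edges supplies for free.
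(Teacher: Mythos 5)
Your proof is correct. Note that the paper itself gives no proof of this lemma -- it defers entirely to the cited reference -- so there is no in-paper argument to compare against; your chain-rule computation of $\partial g_k/\partial p = (P_{g_k}/\|e_k\|)\,\partial e_k/\partial p$, the verification that $\bar{H}(\one\otimes v)=(H\one)\otimes v=0$ and $P_{g_k}g_k=0$, and the dimension count (justified by $p\notin\myspan\{\one\otimes I_d\}$ since at least one edge vector is nonzero) constitute the standard and complete argument, consistent with how the paper itself uses the factorization in part (a) later (e.g., in the proof of Theorem~\ref{result_nullLBNullRB}).
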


\begin{theorem}[\cite{zhao2014TACBearing}]\label{result_bearingEquivalentCondition}
Any two formations $\G(p)$ and $\G(p')$ are bearing equivalent if and only if $\RB(p)p'=0$.
\end{theorem}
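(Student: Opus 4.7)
The plan is to exploit the explicit factorization of the bearing rigidity matrix given in Lemma~\ref{lemma_BearingRigidityProperty}(a), which reduces the claim to a pointwise (per-edge) statement about orthogonal projections.

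First I would compute $\RB(p)p'$ directly. Using $\RB(p)=\mydiag(P_{g_k}/\|e_k\|)\bar{H}$ and the fact that $\bar{H}p'$ is the stacked vector of edge vectors $e'_k\triangleq p'_j-p'_i$ of the primed formation along the same (directed) edge ordering, the $k$-th block of $\RB(p)p'$ is simply $\tfrac{1}{\|e_k\|}P_{g_k}e'_k$. Since $\|e_k\|>0$ for every edge of the (well-defined) formation $\G(p)$, we get the equivalence
\begin{align*}
\RB(p)p'=0 \iff P_{g_k}e'_k=0 \text{ for all } k\in\{1,\dots,m\}.
\end{align*}

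Next I would translate the per-edge condition $P_{g_k}e'_k=0$ into the bearing-equivalence condition $P_{g_k}g'_k=0$. Because $\G(p')$ is itself a formation, $\|e'_k\|>0$ and $e'_k=\|e'_k\|\,g'_k$, so $P_{g_k}e'_k=0$ is equivalent to $P_{g_k}g'_k=0$ for each $k$. Re-indexing edges back as $(i,j)\in\E$, this is exactly Definition~\ref{definition_bearingEquivalence} of bearing equivalence. Chaining the two equivalences yields the theorem in both directions simultaneously.

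There is no real obstacle here; the statement is essentially a restatement of Lemma~\ref{lemma_BearingRigidityProperty}(a) once one notices that $\bar{H}p'$ produces the edge vectors of the second formation and that the projection $P_{g_k}$ annihilates a vector exactly when that vector is parallel to $g_k$. The only thing to be slightly careful about is the nondegeneracy assumption $\|e_k\|,\|e'_k\|>0$, which is implicit in the definition of a formation (otherwise bearings are undefined). No structural property of the graph (connectedness, undirectedness, etc.) is used, so the result holds verbatim for the directed setting considered in this paper.
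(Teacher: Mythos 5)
Your proof is correct and follows essentially the same route as the paper: although the paper defers the proof of this theorem to the cited reference, the identical computation (factor $\RB$ via Lemma~\ref{lemma_BearingRigidityProperty}(a), note that $\bar{H}p'$ stacks the primed edge vectors, and reduce to the per-edge condition $P_{g_k}e'_k=0$) appears verbatim as the first step of the paper's own proof of Theorem~\ref{result_nullLBNullRB}. The one extra observation you add --- that $P_{g_k}e'_k=0$ is equivalent to $P_{g_k}g'_k=0$ because $e'_k=\|e'_k\|\,g'_k$ with $\|e'_k\|>0$ --- is precisely what is needed to connect back to Definition~\ref{definition_bearingEquivalence}, and you correctly flag the nondegeneracy assumptions as the only point requiring care.
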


\begin{theorem}[\cite{zhao2014TACBearing}]\label{theorem_conditionInfiParaRigid}
    For any formation $\G(p)$, the following statements are equivalent:
    \begin{enumerate}[(a)]
    \item $\G(p)$ is {infinitesimally bearing rigid};
    \item $\G(p)$ can be uniquely determined up to a translation and a scaling factor by the inter-neighbor bearings;
    \item $\rank(\RB)=dn-d-1$;
    \item $\Null(\RB)=\myspan\{\one\otimes I_d, p\}$.
    \end{enumerate}
\end{theorem}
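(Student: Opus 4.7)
The plan is to establish the four conditions as equivalent via a short cycle of pairwise implications, exploiting Lemma~\ref{lemma_BearingRigidityProperty}(b) and Theorem~\ref{result_bearingEquivalentCondition} as the main tools. Throughout I shall assume the points $\{p_i\}$ do not all coincide (otherwise no bearings exist and the theorem is vacuous); this ensures $p\notin\myspan\{\one\otimes I_d\}$ and therefore $\dim\myspan\{\one\otimes I_d,p\}=d+1$.

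First I would prove the equivalence (c) $\Leftrightarrow$ (d). By Lemma~\ref{lemma_BearingRigidityProperty}(b) we always have the inclusion $\myspan\{\one\otimes I_d,p\}\subseteq\Null(\RB)$, so $\dim\Null(\RB)\ge d+1$ and hence $\rank(\RB)\le dn-d-1$. Equality of the rank with $dn-d-1$ is therefore equivalent, by the rank--nullity theorem, to equality of the two subspaces, which is exactly (d).

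Next I would prove (a) $\Leftrightarrow$ (d). A variation $\delta p$ is an infinitesimal bearing motion iff $\RB(p)\delta p=0$, i.e., iff $\delta p\in\Null(\RB)$. The two trivial motions — translation $\delta p=\one\otimes\xi$ with $\xi\in\R^d$ and scaling $\delta p=\alpha p$ with $\alpha\in\R$ — together span exactly $\myspan\{\one\otimes I_d,p\}$. Thus (a), which asserts that every element of $\Null(\RB)$ is trivial, is the reverse inclusion $\Null(\RB)\subseteq\myspan\{\one\otimes I_d,p\}$; combined with the always-valid forward inclusion from Lemma~\ref{lemma_BearingRigidityProperty}(b), this is precisely (d).

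Finally I would close the cycle with (b) $\Leftrightarrow$ (d). By Theorem~\ref{result_bearingEquivalentCondition}, another formation $\G(p')$ is bearing equivalent to $\G(p)$ if and only if $p'\in\Null(\RB(p))$. Statement (b) says that the set of bearing-equivalent configurations coincides with the set of translated and scaled copies of $p$, which is $\myspan\{\one\otimes I_d,p\}$; this is exactly the assertion (d). The main subtlety here is bookkeeping about the term ``scaling factor'': since Definition~\ref{definition_bearingEquivalence} uses the projection $P_{g_{ij}}$, parallel but opposite bearings are identified, which is consistent with permitting $\alpha\in\R$ (rather than only $\alpha>0$) when describing the scalings, so the two sets match. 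I expect this last step — matching the geometric meaning of ``up to translation and scaling'' with the algebraic content of $\myspan\{\one\otimes I_d,p\}$ — to be the only place where care is needed; the rest is essentially linear algebra on top of Lemma~\ref{lemma_BearingRigidityProperty}.
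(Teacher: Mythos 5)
The paper does not prove this theorem itself: it is imported verbatim from \cite{zhao2014TACBearing}, with the text explicitly deferring all proofs of the listed results to that reference. Judged on its own, your reconstruction is correct and uses exactly the ingredients the paper makes available: (c)$\Leftrightarrow$(d) by rank--nullity on top of the inclusion $\myspan\{\one\otimes I_d,p\}\subseteq\Null(\RB)$ from Lemma~\ref{lemma_BearingRigidityProperty}(b); (a)$\Leftrightarrow$(d) by identifying the trivial motions with that span; and (b)$\Leftrightarrow$(d) via the characterization of bearing equivalence in Theorem~\ref{result_bearingEquivalentCondition}. The only point you gloss over slightly is in (b)$\Rightarrow$(d): $\Null(\RB)$ may contain vectors that are not admissible configurations (coincident neighbors), so to get the contrapositive you should perturb, i.e., take $v\in\Null(\RB)\setminus\myspan\{\one\otimes I_d,p\}$ and use $p'=p+\ep v$ for small $\ep$ to produce a genuine bearing-equivalent formation outside the span; this is a one-line fix, not a gap in the approach. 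Your remark that the degeneracy assumption (points not all coincident) is needed for $\dim\myspan\{\one\otimes I_d,p\}=d+1$, and that ``scaling'' must admit negative factors to match the projector-based notion of bearing equivalence, are both correct and worth keeping.
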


\section{Problem Formulation}

In this section, we first state the problem of bearing-based formation stabilization, and then present a linear bearing-based formation control law.

\subsection{Problem Statement}

Consider a formation of $n$ agents in $\R^d$ ($n\ge2$, $d\ge2$).
Denote ${p}_i\in\R^d$ as the position of agent $i\in\{1,\dots,n\}$, and let $p=[p_1^\T ,\dots,p_n^\T ]^\T \in\R^{dn}$.
The dynamics of each agent is a single integrator, $\dot{p}_i(t)=u_i(t)$, where $u_i(t)\in\R^d$ is the input to be designed.
The underlying graph $\G=(\V,\E)$ is directed, fixed, and connected.
If $(i,j)\in\E$, we say agent $i$ can ``see'' agent $j$, which means agent $i$ can access the relative position of agent $j$.
We call $\G(p)$ as a {directed} (respectively, undirected) formation if $\G$ is directed (respectively, undirected).

The aim of the control is to stabilize a target formation specified by constant bearing constraints $\{g_{ij}^*\}_{(i,j)\in\E}$.
The bearing constraints must be {feasible} such that there exist formations satisfying these constraints.
The problem to be solved in this paper is stated as below.

\begin{problem}\label{problem_bearingonlyformationcontrol_BE}
    Given feasible constant bearing constraints $\{g_{ij}^*\}_{(i,j)\in\E}$ and an initial configuration $p(0)$, design $u_i(t)$ ($i\in\V$) based on relative position measurements $\{p_i(t)-p_j(t)\}_{j\in\N_i}$ such that $P_{g_{ij}^*}g_{ij}(t)\rightarrow 0$ as $t\rightarrow\infty$ for all $(i,j)\in\E$.
\end{problem}

Let $\G(p^*)$ be an arbitrary formation that satisfies the bearing constraints $\{g_{ij}^*\}_{(i,j)\in\E}$.
Problem~\ref{problem_bearingonlyformationcontrol_BE} actually requires that the formation $\G(p(t))$ converges to a final formation that is \emph{bearing equivalent} to $\G(p^*)$.
This means that we allow the bearings $g_{ij}(t)$ to converge to either $g_{ij}^*$ or $-g_{ij}^*$.  This is different from the problem setup in \cite{zhao2014TACBearing} where $g_{ij}(t)$ must converge to $g_{ij}^*$.
An illustrative example is shown in Figure~\ref{fig_twoKindsofTargetFormation}.
The formation in Figure~\ref{fig_twoKindsofTargetFormation}(a) has the bearings as $\{g^*_{ij}\}_{(i,j)\in\E}$, whereas the formation in Figure~\ref{fig_twoKindsofTargetFormation}(b) has the opposite bearings as $\{-g^*_{ij}\}_{(i,j)\in\E}$.
Problem~\ref{problem_bearingonlyformationcontrol_BE} allows a final formation to be either (a) or (b).

\begin{figure}[t]
  \centering
  \def\myscale{0.45}
  \tikzset{->-/.style={decoration={
  markings,
  mark=at position #1 with {\arrow{>}}},postaction={decorate}}
}
\subfloat[]{
\begin{tikzpicture}[scale=\myscale]
\def\length{2}
\def\linetype{semithick}
\coordinate (origin) at (0,0);
\coordinate (x1) at (-\length,-\length/2);
\coordinate (x2) at (0,\length);
\coordinate (x3) at (\length,-\length/2);
\coordinate (x1_ref) at (\length,\length/2);
\coordinate (x2_ref) at (0,-\length);
\coordinate (x3_ref) at (-\length,\length/2);
\def\radius{8pt}
\def\mycolor{black}
\draw [\linetype,->-=0.55,>=latex,draw=\mycolor] (x1)--(x2);
\draw [\linetype,->-=0.55,>=latex,draw=\mycolor] (x2)--(x3);
\draw [\linetype,->-=0.55,>=latex,draw=\mycolor] (x3)--(x1);
\draw [\linetype,fill=white,draw=\mycolor](x1) circle [radius=\radius];
\draw (x1) node[] {\scriptsize{\color{\mycolor}$1$}};
\draw [\linetype,fill=white, draw=\mycolor](x2) circle [radius=\radius];
\draw (x2) node[] {\scriptsize{\color{\mycolor}$2$}};
\draw [\linetype,fill=white, draw=\mycolor](x3) circle [radius=\radius];
\draw (x3) node[] {\scriptsize{\color{\mycolor}$3$}};
\end{tikzpicture}
}
\subfloat[]{
\begin{tikzpicture}[scale=\myscale]
\def\length{2}
\def\linetype{semithick}
\coordinate (origin) at (0,0);
\coordinate (x1) at (-\length,-\length/2);
\coordinate (x2) at (0,\length);
\coordinate (x3) at (\length,-\length/2);
\coordinate (x1_ref) at (\length,\length/2);
\coordinate (x2_ref) at (0,-\length);
\coordinate (x3_ref) at (-\length,\length/2);
\def\radius{10pt}
\def\mycolor2{black}
\draw [\linetype,\mycolor2,->-=0.55,>=latex] (x1_ref)--(x2_ref);
\draw [\linetype,\mycolor2,->-=0.55,>=latex] (x2_ref)--(x3_ref);
\draw [\linetype,\mycolor2,->-=0.55,>=latex] (x3_ref)--(x1_ref);
\draw [\linetype,fill=white,draw=\mycolor2](x1_ref) circle [radius=\radius];
\draw (x1_ref) node[] {\color{\mycolor2}\scriptsize{$1$}};
\draw [\linetype,fill=white,draw=\mycolor2](x2_ref) circle [radius=\radius];
\draw (x2_ref) node[] {\color{\mycolor2}\scriptsize{$2$}};
\draw [\linetype,fill=white,draw=\mycolor2](x3_ref) circle [radius=\radius];
\draw (x3_ref) node[] {\color{\mycolor2}\scriptsize{$3$}};
\end{tikzpicture}
}
  \caption{The two formations are bearing equivalent but with opposite bearings.}
  \label{fig_twoKindsofTargetFormation}
\end{figure}

\subsection{Bearing-Based Formation Control Law}

The bearing-based formation control law considered in this paper is
\begin{align}\label{eq_controlLaw}
\dot{p}_i(t)= - \sum_{j\in\N_i} P_{g_{ij}^*}\left(p_i(t)-p_j(t)\right), \quad i\in\V,
\end{align}
where $P_{g_{ij}^*}=I_d-(g_{ij}^*) (g_{ij}^*)^\T$.
Control law \eqref{eq_controlLaw} was originally proposed in \cite{zhao2015ECC} for formations with undirected interaction topologies.
As will be shown in this paper later, it can also be applied to the directed case.
Several remarks on the control law are given below.
Firstly, control law \eqref{eq_controlLaw} is distributed because the control of each agent only requires the relative positions of its neighbors.
Secondly, the matrix $P_{g_{ij}^*}$ is an orthogonal projection matrix and hence control law \eqref{eq_controlLaw} has a clear geometric meaning.
As shown in Figure~\ref{fig_controlLawGeometricMeaning}, the control term $-P_{{g}_{ij}^*}({p}_i(t)-{p}_j(t))$ is the orthogonal projection of ${p}_j(t)-{p}_i(t)$ onto the orthogonal compliment of $g_{ij}$, and hence it acts to reduce the bearing error.

The matrix expression of control law \eqref{eq_controlLaw} is
\begin{align}\label{eq_controlLaw_matrix}
\dot{p}(t)=-\LB p(t),
\end{align}
where $\LB\in\R^{dn\times dn}$ and the $ij$th block submatrix of $\LB$ is
\begin{align*}
\left\{
  \begin{array}{ll}
      [\LB]_{ij}=0_{d\times d}, & i\ne j, (i,j)\notin\E, \\
      {[\LB]_{ij}}=-P_{g_{ij}^*}, & i\ne j, (i,j)\in\E, \\ 
      {[\LB]_{ii}}=\sum_{j\in\N_i}P_{g_{ij}^*}, & i\in\V. \\
  \end{array}
\right.
\end{align*}
The matrix $\LB$ can be interpreted as a matrix-weighted graph Laplacian.
It is jointly determined by the topological and Euclidean structure of the formation.
We call $\LB$ the \emph{bearing Laplacian} since it carries the information of both the bearings and the underlying graph of the formation.
The spectrum and null space of $\LB$ fully determine the stability of the formation dynamics \eqref{eq_controlLaw_matrix}.

\begin{figure}
  \centering
  \def\myscale{0.45}
  \tikzset{->-/.style={decoration={
  markings,
  mark=at position #1 with {\arrow{>}}},postaction={decorate}}
}

\begin{tikzpicture}[scale=\myscale]
            \coordinate (xi) at (0,0);
            \coordinate (xj) at (6,4);
            \coordinate (proj) at (0,4);
            \coordinate (projLeft) at (-0.5,4);
            \def\unit{4}
            \coordinate (gij)                 at (\unit,0);
            \draw [densely dotted,->-=0.55,>=latex, semithick] (xi)--(xj);
            \draw [densely dashed, semithick] (xj)--(projLeft);
            \draw[->, >=latex, very thick] (xi) -- (gij) node [right] {\scriptsize{$g_{ij}^*$}};
            \draw[->, >=latex, very thick] (xi) -- (proj) node [below left] {\scriptsize{$-P_{{g}_{ij}^*}({p}_i(t)-{p}_j(t))$}};
            \def\radius{5pt}
            \draw [fill=white](xi) circle [radius=\radius];
            \draw [fill=white](xj) circle [radius=\radius];
            \draw (xi) node[left] {\scriptsize{${p}_i(t)$}};
            \draw (xj) node[below right] {\scriptsize{${p}_j(t)$}};
            \def\dis{0.5}
            \coordinate (p1) at (0,\dis);
            \coordinate (p2) at (\dis,\dis);
            \coordinate (p3) at (\dis,0);
            \draw [] (p1)--(p2)--(p3);
            \coordinate (p4) at (0,4-\dis);
            \coordinate (p5) at (\dis,4-\dis);
            \coordinate (p6) at (\dis,4);
            \draw [] (p4)--(p5)--(p6);
\end{tikzpicture}
  \vspace{-10pt}
  \caption{The geometric interpretation of control law \eqref{eq_controlLaw}.}
  \label{fig_controlLawGeometricMeaning}
\end{figure}

\section{Bearing Persistence and Formation Stability}

In this section, we define the notion of bearing persistence by the bearing Laplacian and analyze the impact of the bearing persistence on the global formation stability of system \eqref{eq_controlLaw_matrix}.

We note that three subspaces play key roles in the formation stability analysis: $\Null(\LB)$, $\Null(\RB)$, and $\myspan\{\one\otimes I_d, p^*\}$.
First, $\Null(\LB)$ is the space where the solution $p(t)$ converges when the eigenvalues of $\LB$ are in the closed right half of the complex plane.
Second, $\Null(\RB)$ is the space of all the formations that are bearing equivalent to the target formation according to Theorem~\ref{result_bearingEquivalentCondition}.
The equality $\Null(\LB)=\Null(\RB)$ is required in order to solve Problem~\ref{problem_bearingonlyformationcontrol_BE}.
Third, $\myspan\{\one\otimes I_d, p^*\}$ is the space of all the formations that have the same shape as the target formation (but can have different translations and scales).
The equality $\Null(\LB)=\myspan\{\one\otimes I_d, p^*\}$ is desired when the formation is required to converge to a formation that has the same shape as the target formation.
We next explore the relationship between the three subspaces.

We first consider the simple and special case where the graph is undirected.
An undirected graph can be viewed as a special directed graph where $(i,j) \in \mathcal{E}\Leftrightarrow(j,i)\in \mathcal{E}$.
In the undirected case, the properties of the bearing Laplacian can be easily obtained as below.
The proof can be found in \cite{zhao2015NetLocalization}. For the sake of completeness, we also provide a proof below.

\begin{theorem}\label{result_LBRBNullSpace_undirected}
For an undirected formation $\G(p)$, the bearing Laplacian $\LB$ is symmetric positive semi-definite and satisfies
\begin{align*}
\Null(\LB)=\Null(\RB).
\end{align*}
\end{theorem}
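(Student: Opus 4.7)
My plan is to exhibit $\LB$ as a sum of squared projections, from which symmetry, positive semi-definiteness, and the null-space identity all drop out. The key structural fact driving the undirected case is that whenever $(i,j)\in\mathcal{E}$ we also have $(j,i)\in\mathcal{E}$, with $g_{ji}^*=-g_{ij}^*$. Since the projection operator depends on its argument only through $xx^\T$, this forces $P_{g_{ij}^*}=P_{g_{ji}^*}$, and then $[\LB]_{ij}=-P_{g_{ij}^*}=-P_{g_{ji}^*}=[\LB]_{ji}$ directly from the block definition, giving $\LB=\LB^\T$.

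For positive semi-definiteness I would compute the quadratic form directly. Splitting into diagonal and off-diagonal contributions and using $\sum_i\sum_{j\in\N_i}=\sum_{(i,j)\in\mathcal{E}}$, the expansion collapses to
\begin{align*}
x^\T \LB x = \sum_{(i,j)\in\mathcal{E}} x_i^\T P_{g_{ij}^*}(x_i-x_j).
\end{align*}
Pairing each undirected edge $\{i,j\}$ with its two directed copies $(i,j)$ and $(j,i)$ and again invoking $P_{g_{ij}^*}=P_{g_{ji}^*}$, the two terms combine into $(x_i-x_j)^\T P_{g_{ij}^*}(x_i-x_j)=\|P_{g_{ij}^*}(x_i-x_j)\|^2$, using $P_{g_{ij}^*}^\T=P_{g_{ij}^*}$ and $P_{g_{ij}^*}^2=P_{g_{ij}^*}$. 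Consequently
\begin{align*}
x^\T \LB x = \sum_{\{i,j\}} \|P_{g_{ij}^*}(x_i-x_j)\|^2 \ge 0,
\end{align*}
so $\LB$ is positive semi-definite.

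The null-space equality then reduces to matching these summands with the blocks of $\RB x$. By Lemma~\ref{lemma_BearingRigidityProperty}(a), $\RB=\mydiag(P_{g_k^*}/\|e_k^*\|)\bar{H}$, so the block of $\RB x$ indexed by the directed edge $(i,j)$ is $\tfrac{1}{\|e_{ij}^*\|}P_{g_{ij}^*}(x_j-x_i)$. Since $\|e_{ij}^*\|>0$, these blocks vanish exactly when $P_{g_{ij}^*}(x_i-x_j)=0$ for every undirected $\{i,j\}$. Combined with the fact that $\LB\succeq0$ makes $\LB x=0$ equivalent to $x^\T \LB x=0$, this yields $\Null(\LB)=\Null(\RB)$.

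The main obstacle is purely bookkeeping: the undirected interaction topology is encoded as a directed graph in which every edge appears twice, so one must be deliberate about summing over directed edges versus unordered pairs so that the pairwise combination produces a single squared norm per undirected edge rather than a stray factor of two. Beyond that, the argument rests only on Lemma~\ref{lemma_BearingRigidityProperty}(a) and the elementary identities for $P_x$ recorded in the preliminaries.
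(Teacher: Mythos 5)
Your proof is correct and is essentially the paper's argument written out entrywise: the paper factors $\LB=\bar{H}^\T\mydiag(P_{g_k})\bar{H}=\tilde{R}_B^\T\tilde{R}_B$ with $\tilde{R}_B=\mydiag(\|e_k\|)\RB$, which packages your blockwise symmetry check, your quadratic-form identity $x^\T\LB x=\sum_{\{i,j\}}\|P_{g_{ij}}(x_i-x_j)\|^2$, and the conclusion $\Null(\LB)=\Null(\tilde{R}_B)=\Null(\RB)$ into a single factorization. Both routes turn on exactly the two facts you isolate, namely $P_{g_{ij}}=P_{g_{ji}}$ for the two directed copies of each undirected edge and $P_x=P_x^\T P_x$, so no further comparison is needed.
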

\begin{proof}
Each undirected edge in the graph can be viewed as two directed edges with opposite directions.
Index the directed edges from $1$ to $m$.
Then $\LB$ can be written as $\LB=\bar{H}^\T \mydiag(P_{g_k})\bar{H}$, which is clearly symmetric and positive semi-definite.
Moreover, due to $P_{g_k}=P_{g_k}^\T P_{g_k}$, $\LB$ can be rewritten as
\begin{align*}
    \LB=\underbrace{\bar{H}^\T \mydiag{(P_{g_k}^\T )}}_{\tilde{R}_B^\T }\underbrace{\mydiag{(P_{g_k})}\bar{H}}_{\tilde{R}_B}.
\end{align*}
Note $\tilde{R}_B=\mydiag(\|e_k\|)\RB$ where $\RB$ is the bearing rigidity matrix.
As a result, $\Null(\tilde{R}_B)=\Null(\RB) = \Null(\LB)$.
\end{proof}

Based on Theorem~\ref{result_LBRBNullSpace_undirected}, the solution $p(t)$ to system \eqref{eq_controlLaw_matrix} in the undirected case converges into $\Null(\RB)$ and hence the finally converged formation is bearing equivalent to the target formation $\G(p^*)$.
In order to make $p(t)$ converge to $\myspan\{\one\otimes I_d, p^*\}$, it is required the formation to be infinitesimally bearing rigid according to Theorem~\ref{theorem_conditionInfiParaRigid}.

We now consider the general directed case.
When the graph is directed, the bearing Laplacian is not symmetric any more and we do \emph{not} have $\Null(\LB)=\Null(\RB)$ in general.

\begin{theorem}\label{result_nullLBNullRB}
For a directed formation $\G(p)$, the bearing Laplacian $\LB$ satisfies
\begin{align*}
\myspan\{\one\otimes I_d, p\}\subseteq\Null(\RB)\subseteq \Null(\LB).
\end{align*}
\end{theorem}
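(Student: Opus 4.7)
The plan is to dispatch the two inclusions separately, both almost by inspection once we invoke Lemma~\ref{lemma_BearingRigidityProperty}. The first inclusion $\myspan\{\one\otimes I_d, p\}\subseteq\Null(\RB)$ is literally the content of Lemma~\ref{lemma_BearingRigidityProperty}(b), so nothing further is required for it.

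For the second inclusion $\Null(\RB)\subseteq\Null(\LB)$, the strategy is to start from the factorization $\RB=\mydiag(P_{g_k}/\|e_k\|)\bar{H}$ given by Lemma~\ref{lemma_BearingRigidityProperty}(a) and unfold what $\RB v=0$ says edge-by-edge. Take an arbitrary $v=[v_1^\T,\dots,v_n^\T]^\T\in\Null(\RB)$ with each $v_i\in\R^d$. Since the $k$-th $d$-block of $\bar{H}v$ is $v_j-v_i$ whenever edge $k$ is directed from $i$ to $j$, and since $\|e_k\|>0$, the identity $\RB v=0$ is equivalent to the pointwise conditions $P_{g_{ij}}(v_j-v_i)=0$ for every $(i,j)\in\E$.

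Next I would assemble $\LB v$ block-by-block using the given block structure of $\LB$. For each $i\in\V$,
\begin{align*}
(\LB v)_i
= \sum_{j\in\N_i} P_{g_{ij}} v_i - \sum_{j\in\N_i} P_{g_{ij}} v_j
= -\sum_{j\in\N_i} P_{g_{ij}}(v_j - v_i),
\end{align*}
and each summand vanishes by the edgewise condition derived above, giving $\LB v=0$. Hence $v\in\Null(\LB)$, which yields $\Null(\RB)\subseteq\Null(\LB)$.

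I do not anticipate any real obstacle in this proof: it is essentially a two-line block computation once Lemma~\ref{lemma_BearingRigidityProperty} is cited. The only point requiring care is the directed nature of $\G$, namely that the sum appearing in $[\LB]_{ii}$ runs over the out-neighborhood $\N_i=\{j:(i,j)\in\E\}$, exactly the same index set over which the edgewise conditions $P_{g_{ij}}(v_j-v_i)=0$ are derived. An equivalent, slightly more structural packaging would be to introduce the source-incidence matrix $S\in\R^{m\times n}$ with $S_{ki}=1$ iff edge $k$ emanates from vertex $i$ and write $\LB=-\bar{S}^\T\mydiag(\|e_k\|)\RB$ with $\bar{S}=S\otimes I_d$; this displays $\Null(\RB)\subseteq\Null(\LB)$ immediately but adds notation that the block computation above already avoids, so I would prefer the direct route.
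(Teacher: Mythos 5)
Your proposal is correct and follows essentially the same route as the paper's proof: the first inclusion is quoted from Lemma~\ref{lemma_BearingRigidityProperty}(b), and the second is obtained by unfolding $\RB v=0$ into the edgewise conditions $P_{g_{ij}}(v_i-v_j)=0$ via the factorization in Lemma~\ref{lemma_BearingRigidityProperty}(a) and observing that these conditions force each block $\sum_{j\in\N_i}P_{g_{ij}}(v_i-v_j)$ of $\LB v$ to vanish. No gaps.
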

\begin{proof}
The fact $\myspan\{\one\otimes I_d, p\}\subseteq\Null(\RB)$ is given in Lemma~\ref{lemma_BearingRigidityProperty}(b).

We next prove $\Null(\RB)\subseteq \Null(\LB)$.
In order to do that, we firstly show that for any vector $p'\in\R^{dn}$ the equality $\RB p'=0$ holds if and only if
\begin{align}\label{eq_RBp=0Condition}
P_{g_{ij}}(p'_i-p'_j)=0, \quad \forall (i,j)\in\E.
\end{align}
As shown in Lemma~\ref{lemma_BearingRigidityProperty}, by indexing the directed edges from $1$ to $m$, the bearing rigidity matrix can be expressed by $\RB=\mydiag(P_{g_k}/\|e_k\|)\bar{H}$.
As a result, $\RB p'=0\Leftrightarrow \mydiag(P_{g_k}/\|e_k\|)\bar{H}p'=\mydiag(P_{g_k}/\|e_k\|)e'=0\Leftrightarrow P_{g_k}e_k'=0, \forall k\in\{1,\dots,m\}$. Therefore, $\RB p'=0$ if and only if \eqref{eq_RBp=0Condition} holds.
We secondly show that the equality $\LB p'=0$ holds if and only if
\begin{align}\label{eq_LBp=0Condition}
\sum_{j\in\N_i} P_{g_{ij}}(p'_i-p'_j)=0,\quad \forall i\in\V.
\end{align}
Since
\begin{align*}
\LB p'=
\left[
  \begin{array}{c}
    \vdots \\
    \sum_{j\in\N_i} P_{g_{ij}}(p'_i-p'_j) \\
    \vdots \\
  \end{array}
\right],
\end{align*}
we have $\LB p'=0$ if and only if \eqref{eq_LBp=0Condition} holds.
Finally, since equation \eqref{eq_RBp=0Condition} implies equation \eqref{eq_LBp=0Condition}, any vector in $\Null(\RB)$ is also in $\Null(\LB)$ and hence $\Null(\RB)\subseteq \Null(\LB)$.
\end{proof}

If $\Null(\LB)$ is larger than $\Null(\RB)$, the formation may converge to a final formation in $\Null(\LB)$ but not in $\Null(\RB)$, which is not bearing equivalent to the target formation.
Hence it is required that $\Null(\RB)=\Null(\LB)$ in order to solve Problem~\ref{problem_bearingonlyformationcontrol_BE}, which motivates the following notion.

\begin{definition}[Bearing Persistence]\label{definition_bearingPersistence}
A directed formation $\G(p)$ is \emph{bearing persistent} if $\Null(\RB)=\Null(\LB)$.
\end{definition}

The name of persistence actually is inherited from the notion of persistence in distance-based formation control problems \cite{Hendrickx2007IJRNC,Oh2015Automatica}.
Bearing persistence describes whether or not the directed graph is persistent with the bearing rigidity of a formation.
If they are not persistent, undesired equilibriums would appear for system \eqref{eq_controlLaw_matrix} and the target formation cannot be globally stabilized.
In addition, it follows from Theorem~\ref{result_LBRBNullSpace_undirected} that undirected formations are always bearing persistent.
The next result immediately follows from Theorem~\ref{result_nullLBNullRB} and Definition~\ref{definition_bearingPersistence}.

\begin{theorem}\label{theorem_NULLLB=1andp}
For a directed formation $\G(p)$, the equality $$\Null(\LB)=\myspan\{\one\otimes I_d, p\}$$ holds if and only if the formation is both infinitesimally bearing rigid and bearing persistent.
\end{theorem}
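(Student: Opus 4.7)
The plan is to combine the chain of inclusions already established with the characterizations of infinitesimal bearing rigidity and bearing persistence. Specifically, I would anchor the argument on three previously stated facts: (i) Theorem~\ref{result_nullLBNullRB}, which gives the chain
\[
\myspan\{\one\otimes I_d, p\}\subseteq\Null(\RB)\subseteq \Null(\LB);
\]
(ii) Theorem~\ref{theorem_conditionInfiParaRigid}(d), which states that $\G(p)$ is infinitesimally bearing rigid if and only if $\Null(\RB)=\myspan\{\one\otimes I_d,p\}$; and (iii) Definition~\ref{definition_bearingPersistence}, which says $\G(p)$ is bearing persistent if and only if $\Null(\LB)=\Null(\RB)$. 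The theorem then becomes a matter of threading these equalities through the chain.

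For the sufficiency direction, I would assume $\G(p)$ is both infinitesimally bearing rigid and bearing persistent. Then by (ii) the first inclusion in (i) collapses to an equality, and by (iii) the second inclusion collapses to an equality. Chaining the two gives $\Null(\LB)=\myspan\{\one\otimes I_d,p\}$ directly.

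For the necessity direction, assume $\Null(\LB)=\myspan\{\one\otimes I_d,p\}$. Inserting this into the chain of (i) yields
\[
\myspan\{\one\otimes I_d, p\}\subseteq\Null(\RB)\subseteq \myspan\{\one\otimes I_d, p\},
\]
so both inclusions are equalities. The first equality is exactly the condition for infinitesimal bearing rigidity via (ii), and the second is exactly bearing persistence via (iii).

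There is essentially no technical obstacle here; the work has already been done in Theorem~\ref{result_nullLBNullRB} (the nontrivial inclusion $\Null(\RB)\subseteq\Null(\LB)$) and in Theorem~\ref{theorem_conditionInfiParaRigid} (the rigidity characterization). The only thing to be careful about is not confusing the roles of the three subspaces and making sure the \emph{equivalence} (not just one direction) of each characterization is cited, since both directions are needed when inserting the equalities into the sandwich chain.
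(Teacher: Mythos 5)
Your proposal is correct and follows essentially the same route as the paper: both arguments reduce the statement to the sandwich chain $\myspan\{\one\otimes I_d, p\}\subseteq\Null(\RB)\subseteq \Null(\LB)$ from Theorem~\ref{result_nullLBNullRB}, observing that the desired equality holds if and only if both inclusions collapse, which are exactly bearing persistence (Definition~\ref{definition_bearingPersistence}) and infinitesimal bearing rigidity (Theorem~\ref{theorem_conditionInfiParaRigid}(d)). Your write-up merely makes the necessity direction slightly more explicit than the paper's one-sentence version.
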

\begin{proof}
According to Theorem~\ref{result_nullLBNullRB}, the equality $\Null(\LB)=\myspan\{\one\otimes I_d, p\}$ holds if and only if $\Null(\LB)=\Null(\RB)$ and $\Null(\RB)=\myspan\{\one\otimes I_d, p\}$, where the former one holds if and only if the formation is bearing persistent by definition and the latter one holds if and only if the formation is infinitesimally bearing rigid by
Theorem~\ref{theorem_conditionInfiParaRigid}.
\end{proof}

Theorem~\ref{theorem_NULLLB=1andp} is important in the sense that it gives the necessary and sufficient condition for the equality $\Null(\LB)=\myspan\{\one\otimes I_d, p\}$ in the directed case.
This equality is desired when the formation is required to converge to a formation that has the same shape as the target formation.

\begin{figure}
  \centering
  \def\myscale{0.45}
  \tikzset{->-/.style={decoration={
  markings,
  mark=at position #1 with {\arrow{>}}},postaction={decorate}}
}
\def\mycolor{black}
\def\radius{9pt}
\def\length{4}
\def\linetype{} 
\subfloat[]{
\begin{tikzpicture}[scale=\myscale]
\coordinate (x1) at (0,\length);
\coordinate (x2) at (\length,\length);
\coordinate (x3) at (\length,0);
\coordinate (x4) at (0,0);
\draw [\linetype,semithick,->-=0.55,>=latex,draw=\mycolor] (x1)--(x4);
\draw [\linetype,semithick,->-=0.55,>=latex,draw=\mycolor] (x1)--(x2);
\draw [\linetype,semithick,->-=0.55,>=latex,draw=\mycolor] (x2)--(x3);
\draw [\linetype,semithick,->-=0.55,>=latex,draw=\mycolor] (x3)--(x4);
\draw [\linetype,semithick,->-=0.55,>=latex,draw=\mycolor] (x2)--(x4);
\draw [fill=white](x1) circle [radius=\radius];
\draw (x1) node[] {\scriptsize{\color{\mycolor}$1$}};
\draw [fill=white](x2) circle [radius=\radius];
\draw (x2) node[] {\scriptsize{\color{\mycolor}$2$}};
\draw [fill=white](x3) circle [radius=\radius];
\draw (x3) node[] {\scriptsize{\color{\mycolor}$3$}};
\draw [fill=white](x4) circle [radius=\radius];
\draw (x4) node[] {\scriptsize{\color{\mycolor}$4$}};
\end{tikzpicture}
}
\quad
\subfloat[]{
\begin{tikzpicture}[scale=\myscale]
\coordinate (x1) at (0,\length);
\coordinate (x2) at (\length,\length);
\coordinate (x3) at (\length,0);
\coordinate (x4) at (0,0);
\draw [\linetype,semithick,->-=0.55,>=latex,draw=\mycolor] (x1)--(x4);
\draw [\linetype,semithick,->-=0.55,>=latex,draw=\mycolor] (x2)--(x1);
\draw [\linetype,semithick,->-=0.55,>=latex,draw=\mycolor] (x2)--(x3);
\draw [\linetype,semithick,->-=0.55,>=latex,draw=\mycolor] (x3)--(x4);
\draw [fill=white](x1) circle [radius=\radius];
\draw (x1) node[] {\scriptsize{\color{\mycolor}$1$}};
\draw [fill=white](x2) circle [radius=\radius];
\draw (x2) node[] {\scriptsize{\color{\mycolor}$2$}};
\draw [fill=white](x3) circle [radius=\radius];
\draw (x3) node[] {\scriptsize{\color{\mycolor}$3$}};
\draw [fill=white](x4) circle [radius=\radius];
\draw (x4) node[] {\scriptsize{\color{\mycolor}$4$}};
\end{tikzpicture}
}
\vspace{-5pt}
  \caption{Examples of bearing \emph{persistent} formations.}
  \label{fig_examplePersistent}
\end{figure}
\begin{figure}
  \centering
  \def\myscale{0.45}
  \vspace{-10pt}
  \tikzset{->-/.style={decoration={
  markings,
  mark=at position #1 with {\arrow{>}}},postaction={decorate}}
}
\def\mycolor{black}
\def\radius{9pt}
\def\length{4}
\def\linetype{} 
\subfloat[]{
\begin{tikzpicture}[scale=\myscale]
\coordinate (x1) at (0,\length);
\coordinate (x2) at (\length,\length);
\coordinate (x3) at (\length,0);
\coordinate (x4) at (0,0);
\draw [\linetype,semithick,->-=0.55,>=latex,draw=\mycolor] (x1)--(x4);
\draw [\linetype,semithick,->-=0.55,>=latex,draw=\mycolor] (x2)--(x1);
\draw [\linetype,semithick,->-=0.55,>=latex,draw=\mycolor] (x2)--(x3);
\draw [\linetype,semithick,->-=0.55,>=latex,draw=\mycolor] (x3)--(x4);
\draw [\linetype,semithick,->-=0.55,>=latex,draw=\mycolor] (x2)--(x4);
\draw [fill=white](x1) circle [radius=\radius];
\draw (x1) node[] {\scriptsize{\color{\mycolor}$1$}};
\draw [fill=white](x2) circle [radius=\radius];
\draw (x2) node[] {\scriptsize{\color{\mycolor}$2$}};
\draw [fill=white](x3) circle [radius=\radius];
\draw (x3) node[] {\scriptsize{\color{\mycolor}$3$}};
\draw [fill=white](x4) circle [radius=\radius];
\draw (x4) node[] {\scriptsize{\color{\mycolor}$4$}};
\end{tikzpicture}
}
\quad
\subfloat[]{
\begin{tikzpicture}[scale=\myscale]
\coordinate (x1) at (0,\length);
\coordinate (x2) at (\length/3*2,0);
\coordinate (x3) at (0,0);
\coordinate (x4) at (-\length/3*2,0);
\draw [\linetype,semithick,->-=0.55,>=latex,draw=\mycolor] (x1)--(x4);
\draw [\linetype,semithick,->-=0.55,>=latex,draw=\mycolor] (x1)--(x2);
\draw [\linetype,semithick,->-=0.55,>=latex,draw=\mycolor] (x3)--(x2);
\draw [\linetype,semithick,->-=0.55,>=latex,draw=\mycolor] (x3)--(x4);
\draw [fill=white](x1) circle [radius=\radius];
\draw (x1) node[] {\scriptsize{\color{\mycolor}$1$}};
\draw [fill=white](x2) circle [radius=\radius];
\draw (x2) node[] {\scriptsize{\color{\mycolor}$2$}};
\draw [fill=white](x3) circle [radius=\radius];
\draw (x3) node[] {\scriptsize{\color{\mycolor}$3$}};
\draw [fill=white](x4) circle [radius=\radius];
\draw (x4) node[] {\scriptsize{\color{\mycolor}$4$}};
\end{tikzpicture}
}
\vspace{-5pt}
  \caption{Examples of bearing \emph{non-persistent} formations}
  \label{fig_exampleNonPersistent}
\end{figure}

\subsection{Illustrative Examples}
Examples are shown in Figures~\ref{fig_examplePersistent} and \ref{fig_exampleNonPersistent} to illustrate the notion of bearing persistence.
We next specifically examine the two formations in Figure~\ref{fig_examplePersistent}(a) and Figure~\ref{fig_exampleNonPersistent}(a).
For the two formations, the agents have exactly the same positions as $p_1=[-1,1]^\T$, $p_2=[1,1]^\T$, $p_3=[1,-1]^\T$, and $p_4=[-1,-1]^\T$.
The underlying graphs of the two formations, however, are different; the edges between agents $1$ and $2$ have the opposite directions.
First of all, both of the two formations are infinitesimally bearing rigid because it can be verified that for both of the formations $\rank(\RB)=2n-3$. The null space of $\RB$ for the two formations are the same and given by
\begin{align*}
\Null(\RB)=\myspan\left\{
  \begin{array}{rrr}
    1 & 0 & -1 \\
    0 & 1 & 1\\
    1 & 0 & 1 \\
    0 & 1 & 1 \\
    1 & 0 & 1 \\
    0 & 1 & -1 \\
    1 & 0 & -1\\
    0 & 1 & -1 \\
  \end{array}
\right\}=\myspan\{\one\otimes I_2, p\}.
\end{align*}
For the formation in Figure~\ref{fig_examplePersistent}(a), it can be verified that $\Null(\LB)=\Null(\RB)$.
As a result, this formation is bearing persistent.
However, for the formation in Figure~\ref{fig_exampleNonPersistent}(a), the null space of $\LB$ is
\begin{align*}
\Null(\LB)=\myspan\left\{
  \begin{array}{rrrr}
    1 & 0 & -1 &0\\
    0 & 1 & 1&2\\
    1 & 0 & 1&-1 \\
    0 & 1 & 1&1 \\
    1 & 0 & 1&-2 \\
    0 & 1 & -1&0 \\
    1 & 0 & -1&0\\
    0 & 1 & -1&0 \\
  \end{array}
\right\}.
\end{align*}
As can be seen, in addition to the basis vectors in $\Null(\RB)$, an extra vector (the fourth) is contained in $\Null(\LB)$.
As a result, this formation is not bearing persistent.
We will later show by simulation that this non-persistent formation cannot be globally stabilized.

It must be noted that the bearing persistence of a formation is \emph{independent} to the bearing rigidity of the formation.
For example, the directed formation as shown in Figure~\ref{fig_examplePersistent}(a) is both bearing persistent and infinitesimally bearing rigid.
The one in Figure~\ref{fig_examplePersistent}(b) is bearing persistent but not bearing rigid.
The one in Figure~\ref{fig_exampleNonPersistent}(a) is infinitesimally bearing rigid but not bearing persistent.
The one in Figure~\ref{fig_exampleNonPersistent}(b) is neither infinitesimally bearing rigid nor bearing persistent.

\subsection{A Sufficient Condition for Bearing Persistence}

An important problem that follows the definition of bearing persistence is what kind of formations are bearing persistent.
We next give a sufficient condition for bearing persistence.
All the persistent formations given in Figure~\ref{fig_examplePersistent} can be examined by this sufficient condition.
For a directed edge $(i,j)\in\E$, since this edge leaves agent $i$ and enters agent $j$, we call it an \emph{outgoing} edge for agent $i$ and an \emph{incoming} edge for agent $j$.
In the formation dynamics \eqref{eq_controlLaw}, \emph{each agent is responsible for achieving all the bearings of its outgoing edges}.
It is intuitively reasonable that when the number of the outgoing edges of an agent is large, the chance that the agent is not able to archive all the desired edge bearings is high.
This intuition is consistent with the following sufficient condition for bearing persistence in $\R^2$.

\begin{proposition}\label{result_BP_sufficientCondition}
For a directed formation $\G(p)$ in $\R^2$, if each agent has at most two non-collinear outgoing edges, then the formation is bearing persistent.
\end{proposition}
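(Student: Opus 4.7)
The plan is to show the reverse inclusion $\Null(\LB) \subseteq \Null(\RB)$, since Theorem~\ref{result_nullLBNullRB} already gives $\Null(\RB) \subseteq \Null(\LB)$, which together with the reverse inclusion yields bearing persistence by Definition~\ref{definition_bearingPersistence}. To this end I would reuse the node-wise characterization isolated inside the proof of Theorem~\ref{result_nullLBNullRB}: $\LB p' = 0$ is equivalent to the vertex equations $\sum_{j \in \N_i} P_{g_{ij}}(p_i' - p_j') = 0$ for every $i \in \V$, while $\RB p' = 0$ is equivalent to the stronger edge-wise equations $P_{g_{ij}}(p_i' - p_j') = 0$ for every $(i,j)\in\E$. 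The task thus reduces to showing that, under the stated hypothesis, the vertex equations decouple into the edge equations.

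I would then fix an arbitrary agent $i$ and split into cases according to the number of outgoing edges. If $i$ has no outgoing edge, there is nothing to check. If $i$ has exactly one outgoing edge $(i,j)$, the vertex equation at $i$ immediately gives $P_{g_{ij}}(p_i' - p_j') = 0$. The crucial case is when $i$ has two outgoing edges $(i,j_1)$ and $(i,j_2)$ whose bearings $g_{ij_1}, g_{ij_2}$ are non-collinear in $\R^2$. Here the vertex equation reads
\begin{align*}
P_{g_{ij_1}}(p_i' - p_{j_1}') + P_{g_{ij_2}}(p_i' - p_{j_2}') = 0.
\end{align*}
The first summand lies in $\Range(P_{g_{ij_1}}) = \myspan\{g_{ij_1}\}^{\perp}$ and the second in $\myspan\{g_{ij_2}\}^{\perp}$. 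Because $d=2$, each of these orthogonal complements is one-dimensional, and non-collinearity of $g_{ij_1}, g_{ij_2}$ forces these two lines to be distinct, hence to intersect only at the origin. Therefore each summand must individually vanish, giving the edge-wise conditions for both outgoing edges of $i$.

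Running this argument over every vertex $i \in \V$ covers every directed edge (as each edge is outgoing from exactly one vertex), so every edge condition $P_{g_{ij}}(p_i'-p_j')=0$ holds, i.e.\ $p' \in \Null(\RB)$. The main subtlety I expect is simply making the two-edge step airtight: the argument relies essentially on $d=2$, where $\dim \Range(P_{g_{ij}}) = d-1 = 1$, and on the non-collinearity hypothesis that prevents the two one-dimensional image subspaces from coinciding. In higher dimensions the analogous direct-sum step would fail, which is consistent with the proposition being stated only for $\R^2$.
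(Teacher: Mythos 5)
Your proposal is correct and follows essentially the same route as the paper's own proof: reduce $\LB p'=0$ and $\RB p'=0$ to the vertex-wise and edge-wise conditions, then observe that for an agent with two non-collinear outgoing edges in $\R^2$ the two projected terms lie in distinct one-dimensional subspaces and hence must vanish individually. Your write-up is, if anything, slightly more explicit than the paper's about the zero-outgoing-edge case and about why every edge is covered by ranging over vertices.
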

\begin{proof}
We prove that in this case any vector $x\in\R^{2n}$ in $\Null(\LB)$ is also in $\Null(\RB)$.
Note $\LB x=0$ if and only if equation \eqref{eq_LBp=0Condition} holds.
Suppose agent $i\in\V$ has two outgoing edges (i.e., two neighbors).
Let the two neighbors be agents $k$ and $\ell$.
Then, equation \eqref{eq_LBp=0Condition} implies
\begin{align}\label{eq_LBp_twononcollinear}
P_{g_{ik}}(x_i-x_k)+P_{g_{i\ell}}(x_i-x_\ell)=0.
\end{align}
Note the orthogonal projection matrices $P_{g_{ik}}$ and $P_{g_{i\ell}}$ project vectors onto the orthogonal vectors of $g_{ik}$ and $g_{i\ell}$, respectively.
Since $g_{ik}$ and $g_{i\ell}$ are not collinear, equation \eqref{eq_LBp_twononcollinear} holds if and only if $P_{g_{ik}}(x_i-x_k)=0$ and $P_{g_{i\ell}}(x_i-x_\ell)=0$.
The case where agent $i$ has merely one outgoing edge (i.e., one neighbor) is trivial to analyze.
In summary, if each agent has at most two non-collinear outgoing edges, equation \eqref{eq_LBp=0Condition} implies \eqref{eq_RBp=0Condition} and consequently $\Null(\LB)=\Null(\RB)$.
\end{proof}

The sufficient condition in Proposition~\ref{result_BP_sufficientCondition} provides an intuitive way to check the bearing persistence of a formation in $\R^2$.
All the formations in Figure~\ref{fig_examplePersistent} can be concluded as bearing persistent by Proposition~\ref{result_BP_sufficientCondition}.
It should be noted that the outgoing edges in Proposition~\ref{result_BP_sufficientCondition} must be non-collinear; otherwise, the formation may be not bearing persistent. For example, although in the formation given in Figure~\ref{fig_exampleNonPersistent}(a) each agent has at most two outgoing edges, the formation is still non-persistent because the two outgoing edges of agent 3 are collinear.

\subsection{Discussion of the Formation Stability}

Up to now, we have explored the notion of bearing persistence and showed that the global stability cannot be guaranteed for formations that are not bearing persistent.
An important problem that has not been discussed is whether all the eigenvalues of the bearing Laplacian are in the closed right half of the complex plane.
For undirected formations, the answer is positive as shown in Theorem~\ref{result_LBRBNullSpace_undirected}.
For directed formations, the problem is nontrivial to solve and will be the subject of future work.
But according to a large amount of simulations, it is conjectured that the eigenvalues of the bearing Laplacian of a directed formation have nonnegative real parts.
If the conjecture is correct, then the formation dynamics \eqref{eq_controlLaw_matrix} is marginally stable.
Based on the linear system theory, the solution $p(t)$ converges to the projection of $p(0)$ onto $\Null(\LB)$.
If the target formation is bearing persistent, the final formation is in $\Null(\RB)$ and hence Problem~\ref{problem_bearingonlyformationcontrol_BE} can be solved.
Furthermore, if the target formation is both bearing persistent and infinitesimally bearing rigid (i.e., $\Null(\LB)=\myspan\{\one_n\otimes I_d\}$), the final formation has exactly the same shape (but maybe different translation and scale) as the desired target formation.

\section{Simulation Examples}

\begin{figure*}
  \centering
  \subfloat[Initial formation]{\includegraphics[width=0.2\linewidth]{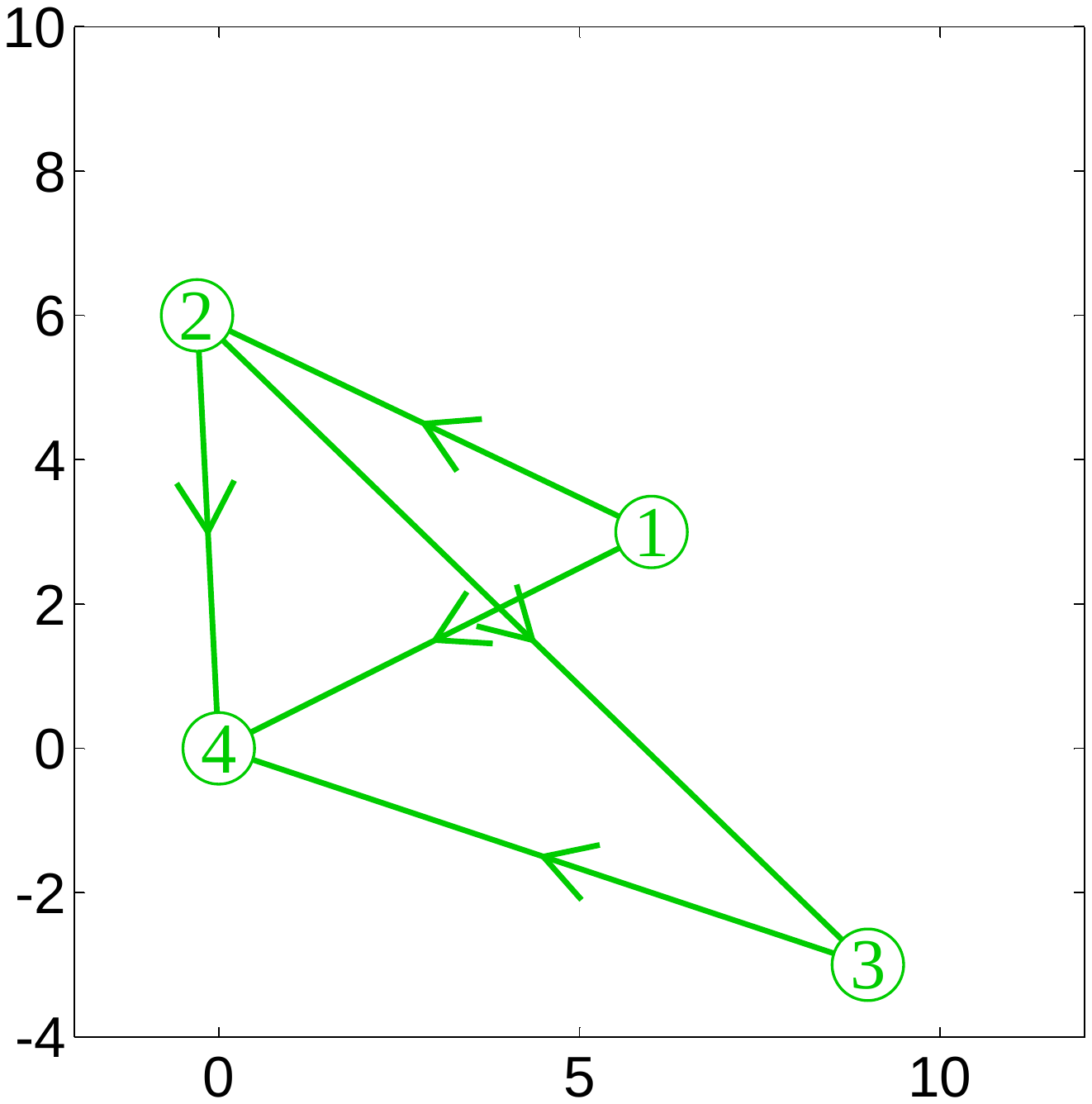}}
  \subfloat[Final formation]{\includegraphics[width=0.2\linewidth]{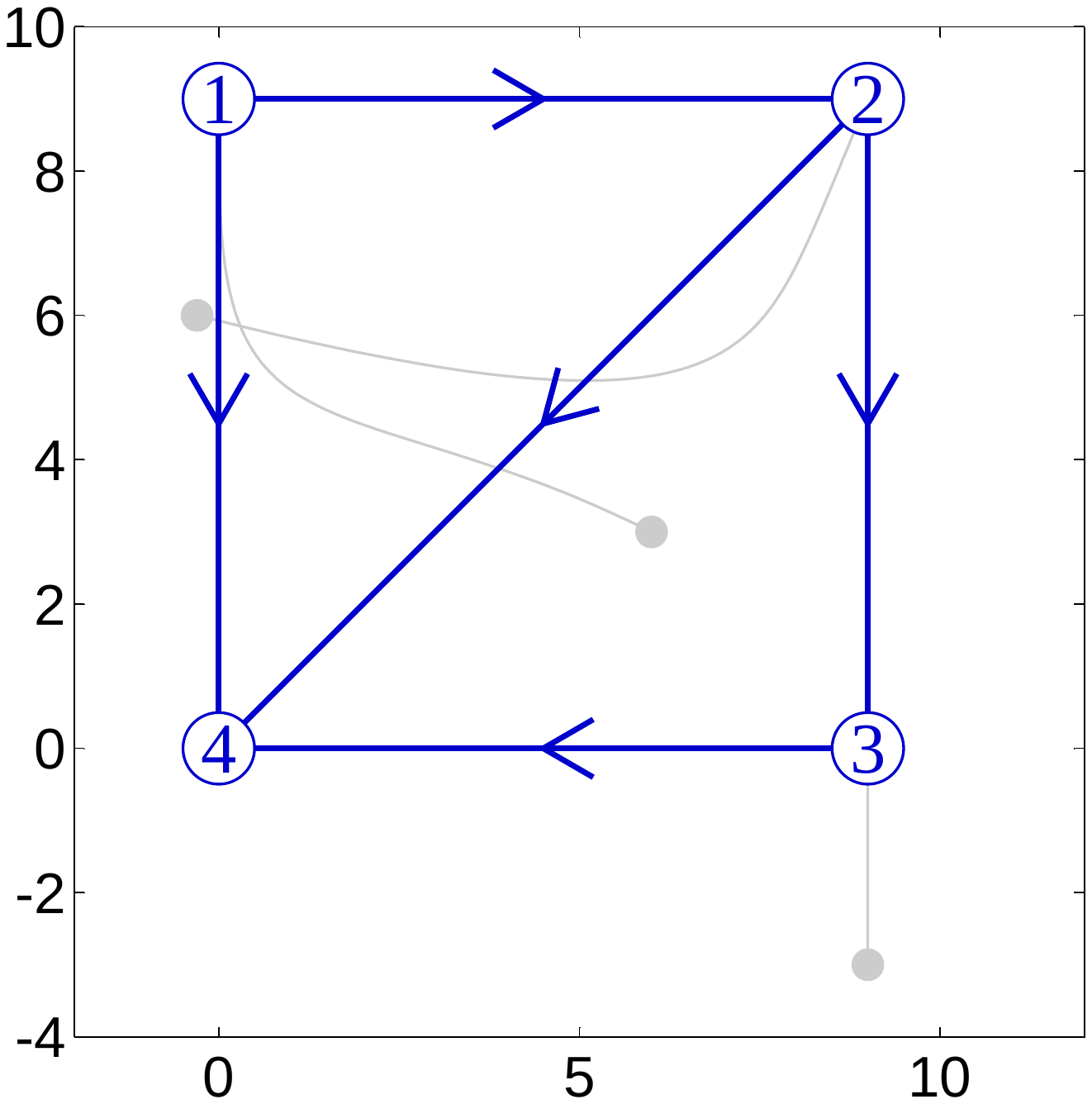}}
  \subfloat[$\|\dot{p}(t)\|=\|\LB p(t)\|$]{\includegraphics[width=0.25\linewidth]{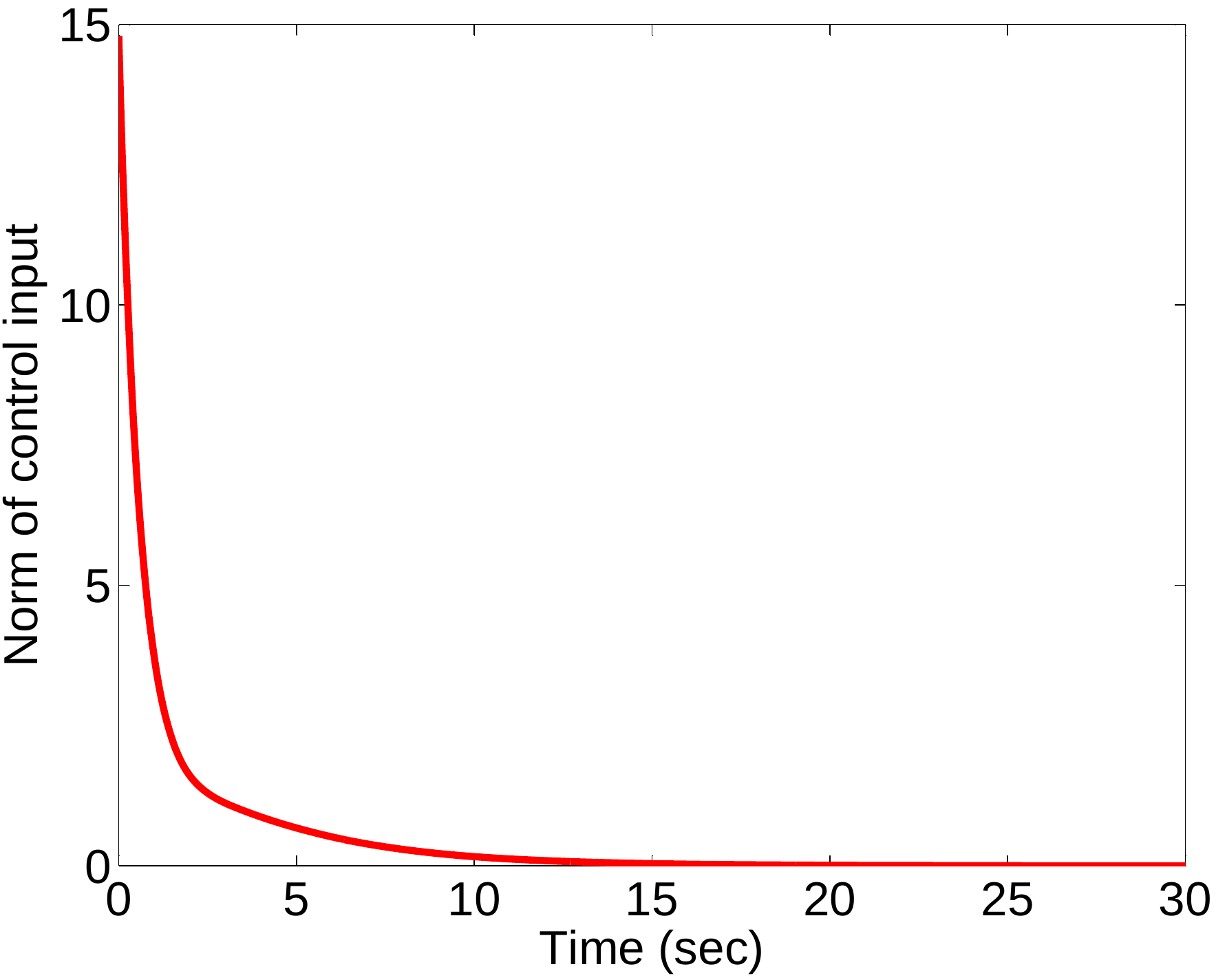}}
  \subfloat[$\sum_{(i,j)\in\E}\|P_{g_{ij}^*}(p_i(t)-p_j(t))\|$]{\includegraphics[width=0.25\linewidth]{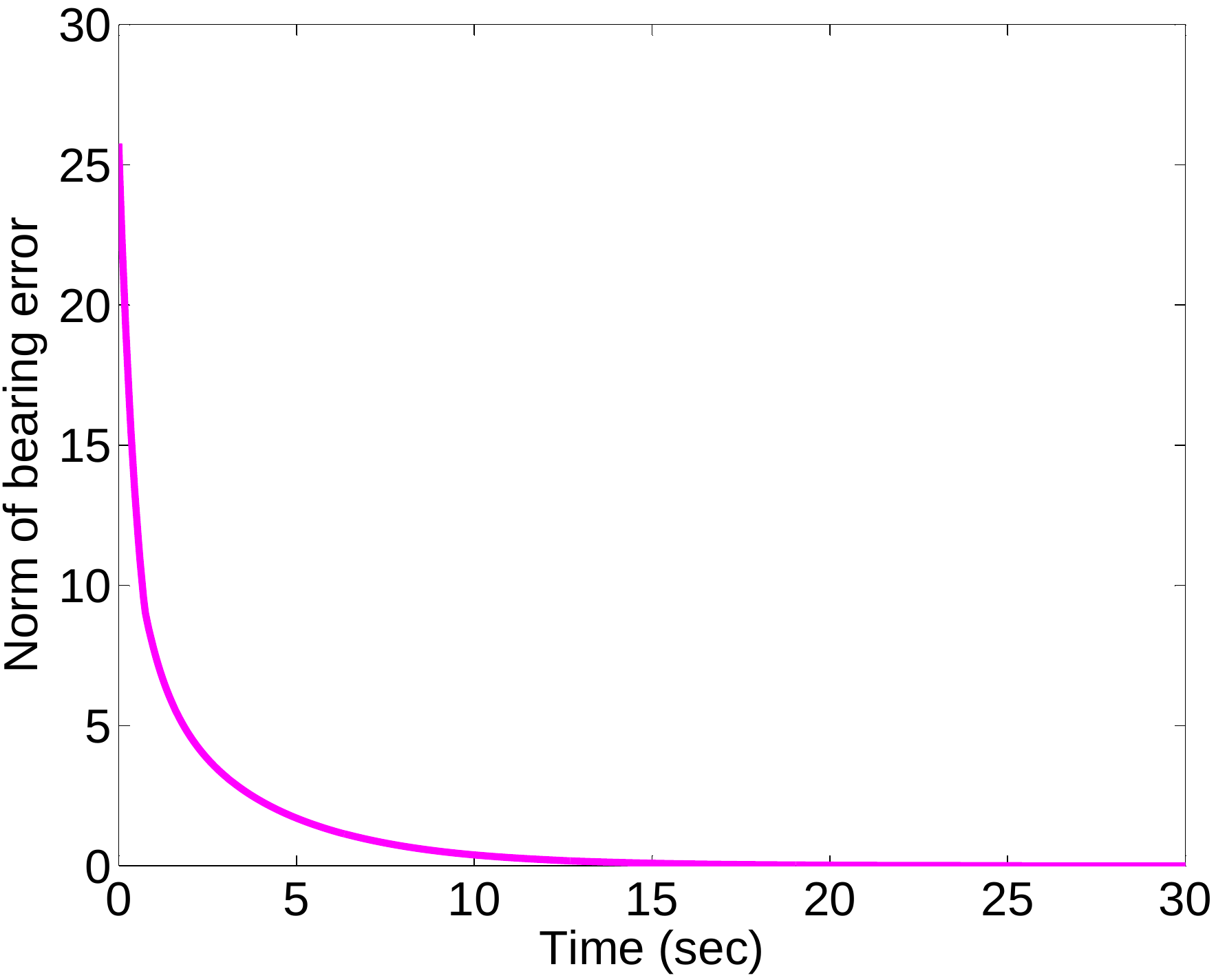}}\\
  \caption{The simulation result for the \emph{bearing persistent} target formation as shown in Figure~\ref{fig_examplePersistent}(a).}
  \label{fig_sim_persistent}
\end{figure*}

\begin{figure*}
  \centering
  \subfloat[Initial formation]{\includegraphics[width=0.2\linewidth]{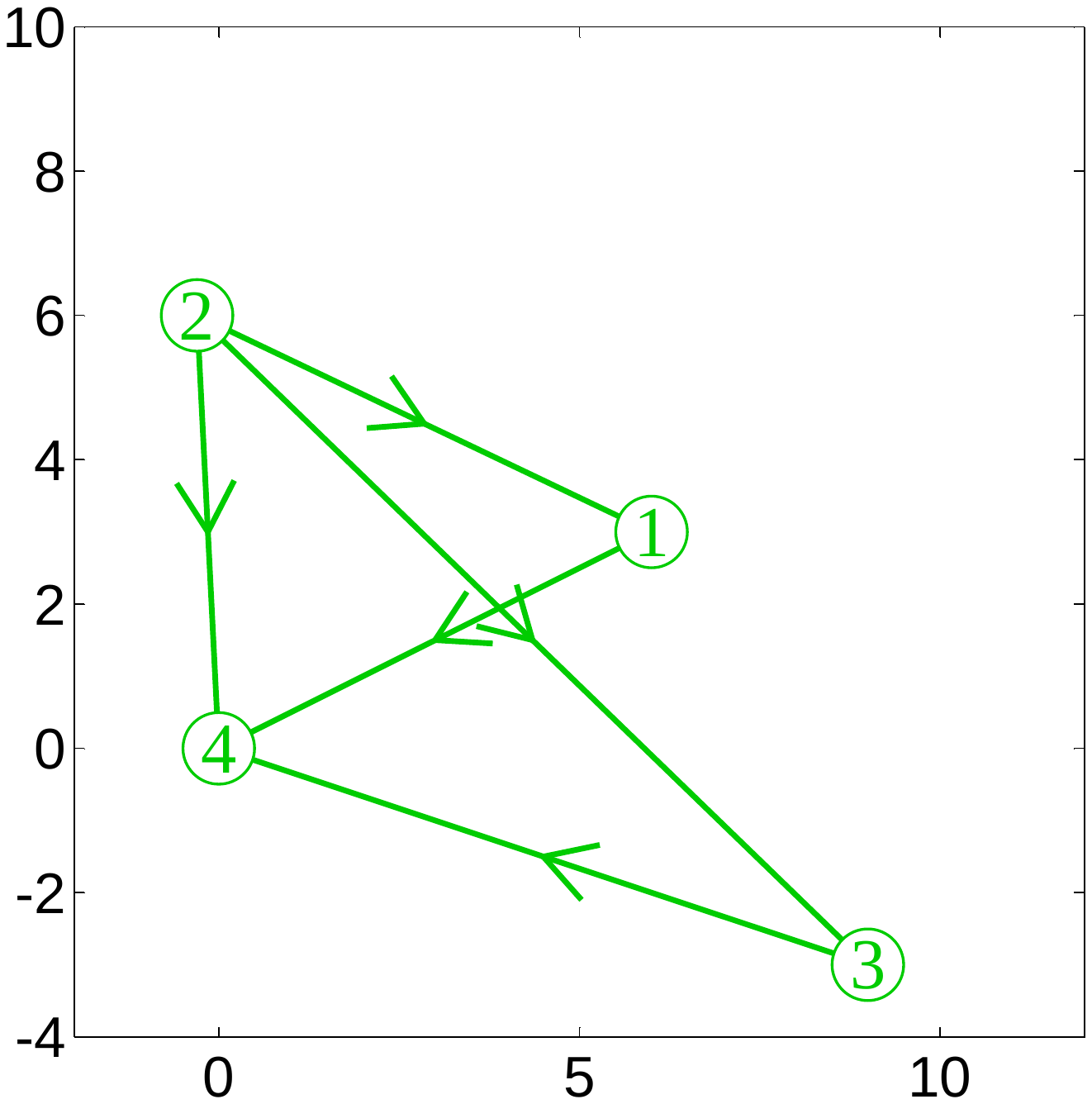}}
  \subfloat[Final formation]{\includegraphics[width=0.2\linewidth]{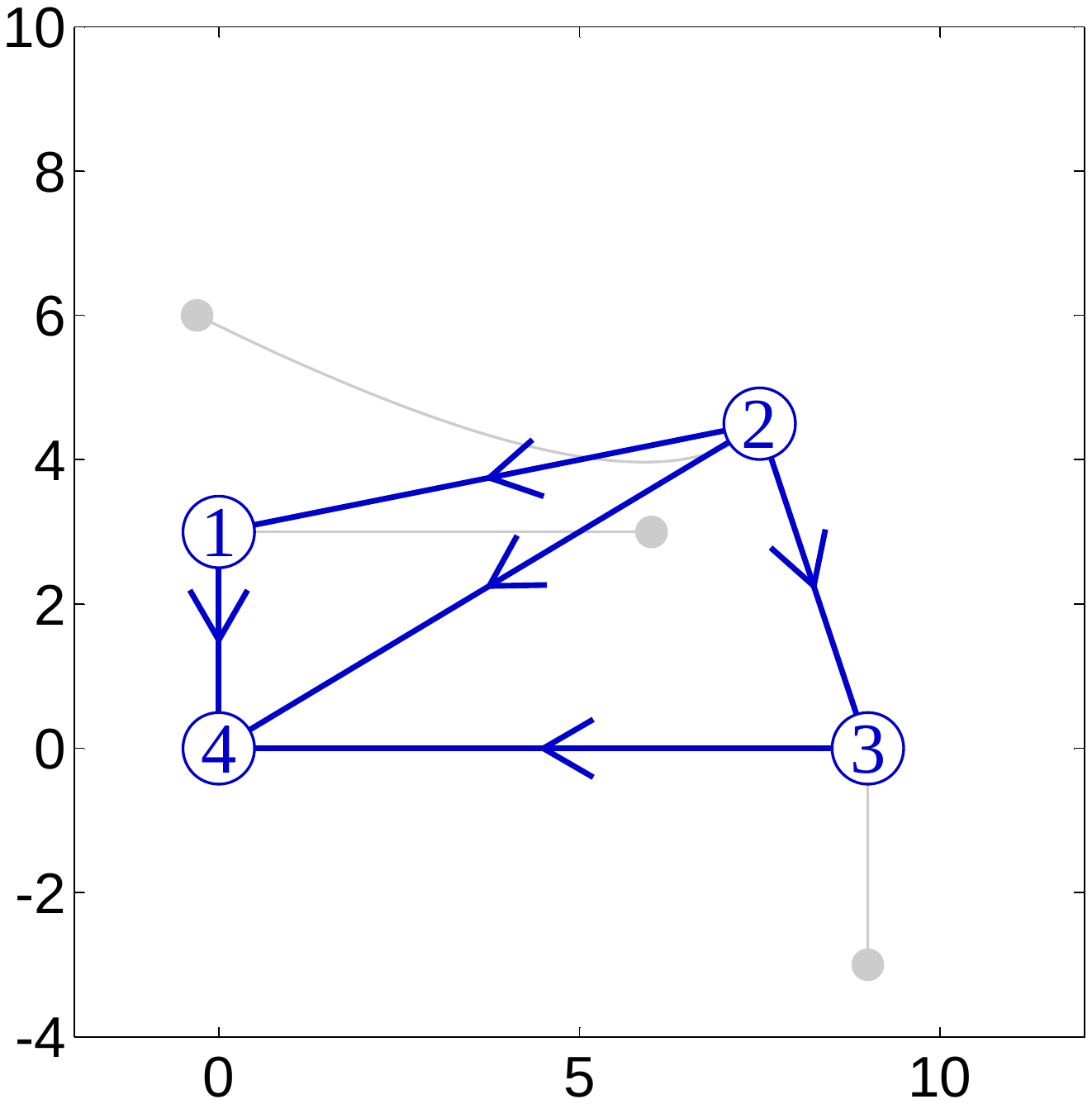}}
  \subfloat[$\|\dot{p}(t)\|=\|\LB p(t)\|$]{\includegraphics[width=0.25\linewidth]{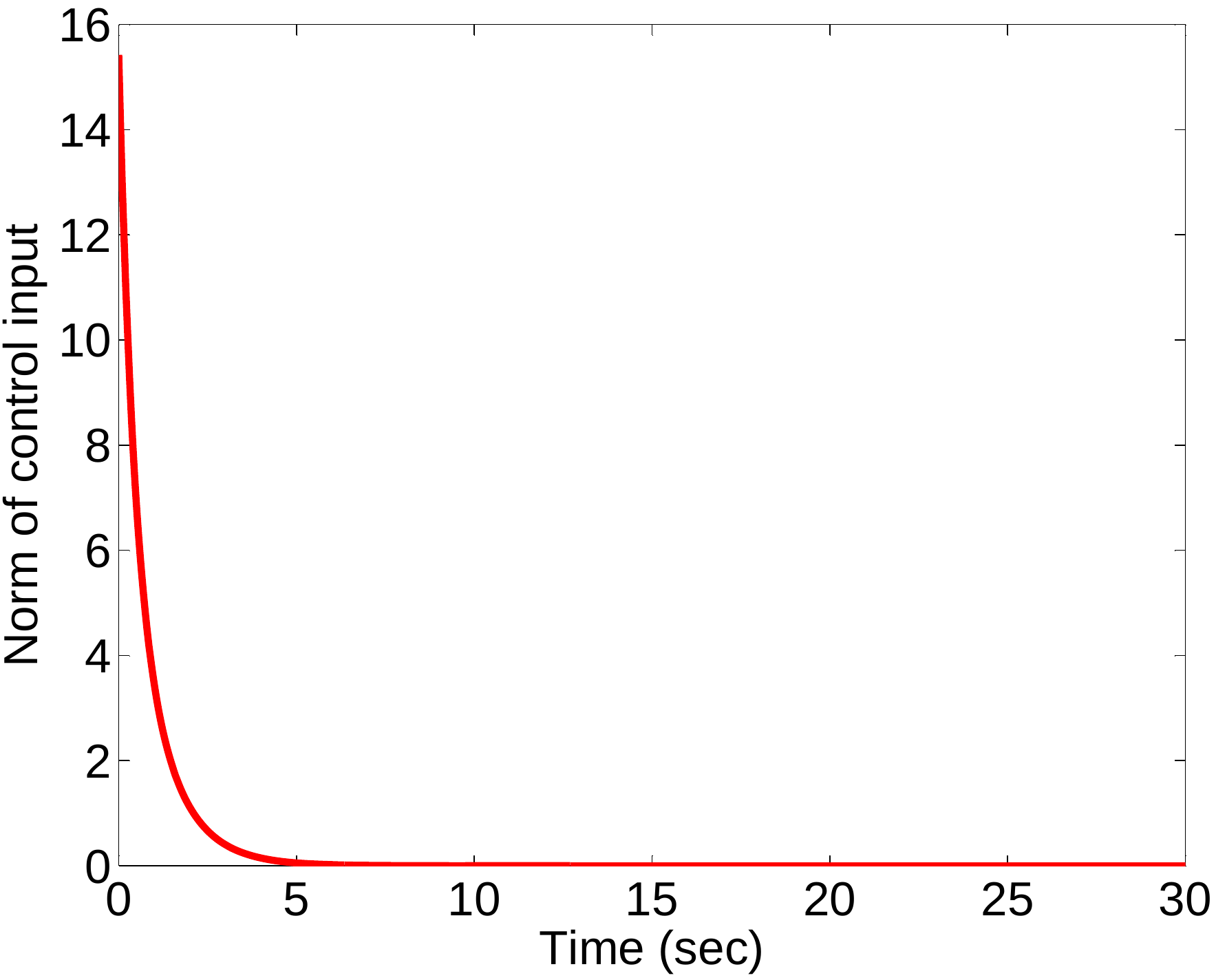}}
  \subfloat[$\sum_{(i,j)\in\E}\|P_{g_{ij}^*}(p_i(t)-p_j(t))\|$]{\includegraphics[width=0.25\linewidth]{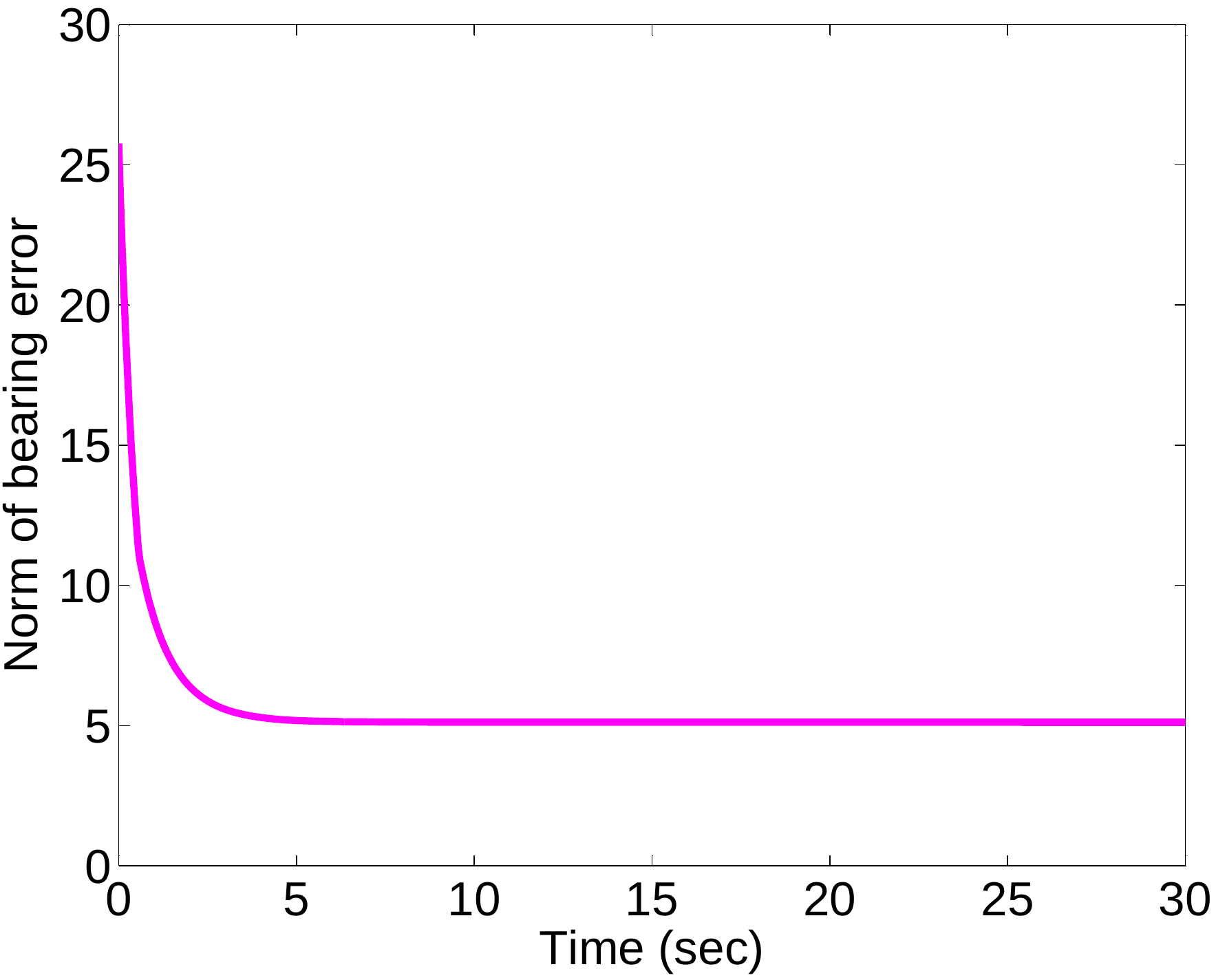}}\\
  \caption{The simulation result for the \emph{bearing non-persistent} target formation as shown in Figure~\ref{fig_exampleNonPersistent}(a).}
  \label{fig_sim_nonpersistent}
\end{figure*}

We firstly consider the \emph{bearing persistent} target formation given in Figure~\ref{fig_examplePersistent}(a).
The simulation results are given in Figure~\ref{fig_sim_persistent}.
As can be seen, the control law \eqref{eq_controlLaw} can steer the formation to a final formation that is bearing equivalent to the target formation.
We secondly consider the target formation given in Figure~\ref{fig_exampleNonPersistent}(a) that is \emph{not bearing persistent}.
The simulation results are given in Figure~\ref{fig_sim_nonpersistent}.
As can be seen, the control law \eqref{eq_controlLaw} steers the formation to an undesired formation.
That is because agent 2 is not able to find a position to satisfy all the three bearing constraints $g_{21}^*$, $g_{24}^*$, and $g_{23}^*$.
The control input $\|\dot{p}(t)\|=\|\LB p(t)\|$ converges to zero, which means $p(t)$ converges into $\Null(\LB)$, but as shown in Figure~\ref{fig_sim_nonpersistent}(d) the bearing error $\sum_{(i,j)\in\E}\|P_{g_{ij}^*}(p_i(t)-p_j(t))\|$ can \emph{not} converge to zero, which means $p(t)$ does not converge to $\Null(\RB)$.
As a result, $p(t)$ finally converges into $\Null(\LB)\setminus\Null(\RB)$.

\section{Conclusions and Future Work}

In this paper, we studied bearing-based formation stabilization with directed interaction topologies.
We studied a linear distributed control law that can stabilize bearing-constrained target formations in arbitrary dimensions.
It was shown that an important notion, bearing persistence, emerges and critically affects the formation stability in the directed case.
When a target formation is not bearing persistent, undesired equilibriums will appear and the global formation stability cannot be guaranteed.
A sufficient condition that provides an intuitive way to examine the bearing persistence of two-dimensional formations was also proposed.

The work presented in this paper is a first step towards solving the general problem of bearing-based formation control.
Several important problems are not completely solved in this paper and should be studied in the future.
First, the spectrum of the bearing Laplacian with directed graphs is not solved by this paper.
Second, although a sufficient condition for the bearing persistence of formations in the two-dimensional space was proposed, the necessary and sufficient conditions have not been obtained.
Third, the left null space of the bearing Laplacian is also important for determining the final converged value of the formation and should be studied.
Finally, this paper only considered the leaderless scheme. In order to control the centroid and scale as well as the maneuver of the formation, the leader-follower scheme should be studied in the future.

{\small
\section*{Acknowledgements}
The work presented here has been supported by the Israel Science Foundation (grant No. 1490/1).

\bibliography{myOwnPub,zsyReferenceAll} 

\begin{thebibliography}{10}

\bibitem{bishop2010SCL}
M.~Basiri, A.~N. Bishop, and P.~Jensfelt, ``Distributed control of triangular
  formations with angle-only constraints,'' {\em Systems \& Control Letters},
  vol.~59, pp.~147--154, 2010.

\bibitem{bishopconf2011rigid}
A.~N. Bishop, ``Stabilization of rigid formations with direction-only
  constraints,'' in {\em Proceedings of the 50th IEEE Conference on Decision
  and Control and European Control Conference}, (Orlando, FL, USA),
  pp.~746--752, December 2011.

\bibitem{Eren2012IJC}
T.~Eren, ``Formation shape control based on bearing rigidity,'' {\em
  International Journal of Control}, vol.~85, no.~9, pp.~1361--1379, 2012.

\bibitem{Eric2014ACC}
E.~Schoof, A.~Chapman, and M.~Mesbahi, ``Bearing-compass formation control: A
  human-swarm interaction perspective,'' in {\em Proceedings of the 2014
  American Control Conference}, (Portland, USA), pp.~3881--3886, June 2014.

\bibitem{zhao2014TACBearing}
S.~Zhao and D.~Zelazo, ``Bearing rigidity and almost global bearing-only
  formation stabilization,'' {\em IEEE Transactions on Automatic Control},
  vol.~PP, no.~99, 2015.
\newblock (IEEE Early Access Article).

\bibitem{zhao2015ECC}
S.~Zhao and D.~Zelazo, ``Bearing-based distributed control and estimation in
  multi-agent systems,'' in {\em Proceedings of the 2015 European Control
  Conference}, (Linz, Austria), pp.~2207--2212, 2015.

\bibitem{zhao2015MSC}
S.~Zhao and D.~Zelazo, ``Bearing-based formation maneuvering,'' in {\em
  Proceedings of the 2015 IEEE Multi-Conference on Systems and Control},
  (Sydney, Australia), 2015.
\newblock to appear (available at arXiv:1504.03517).

\bibitem{zelazo2015CDC}
D.~Zelazo, A.~Franchi, and P.~R. Giordano, ``Formation control using a {SE}(2)
  rigidity theory,'' in {\em Proceedings of the 54th IEEE Conference on
  Decision and Control}, (Osaka, Japan), 2015.
\newblock to appear.

\bibitem{BishopTAES2009}
A.~N. Bishop, B.~D.~O. Anderson, B.~Fidan, P.~N. Pathirana, and G.~Mao,
  ``Bearing-only localization using geometrically constrained optimization,''
  {\em IEEE Transactions on Aerospace and Electronic Systems}, vol.~45, no.~1,
  pp.~308--320, 2009.

\bibitem{Piovan2013Automatica}
G.~Piovan, I.~Shames, B.~Fidan, F.~Bullo, and B.~D.~O. Anderson, ``On frame and
  orientation localization for relative sensing networks,'' {\em Automatica},
  vol.~49, pp.~206--213, January 2013.

\bibitem{Shames2013TAC}
I.~Shames, A.~N. Bishop, and B.~D.~O. Anderson, ``Analysis of noisy
  bearing-only network localization,'' {\em IEEE Transactions on Automatic
  Control}, vol.~58, pp.~247--252, January 2013.

\bibitem{ZhuGuangwei2014Automatica}
G.~Zhu and J.~Hu, ``A distributed continuous-time algorithm for network
  localization using angle-of-arrival information,'' {\em Automatica}, vol.~50,
  pp.~53--63, January 2014.

\bibitem{zhao2015NetLocalization}
S.~Zhao and D.~Zelazo, ``Bearing-only network localization: localizability,
  sensitivity, and distributed protocols,'' {\em Automatica}, 2015.
\newblock under review (available at arXiv:1502.00154).

\bibitem{Coogan2012Scale}
S.~Coogan and M.~Arcak, ``Scaling the size of a formation using relative
  position feedback,'' {\em Automatica}, vol.~48, pp.~2677--2685, October 2012.

\bibitem{Park2014IJRNC}
M.-C. Park, K.~Jeong, and H.-S. Ahn, ``Formation stabilization and resizing
  based on the control of inter-agent distances,'' {\em International Journal
  of Robust and Nonlinear Control}, 2014.
\newblock doi:10.1002/rnc.3212.

\bibitem{BookGodsilGraph}
C.~Godsil and G.~Royle, {\em Algebraic Graph Theory}.
\newblock New York: Springer, 2001.

\bibitem{Hendrickx2007IJRNC}
J.~Hendrickx, B.~D.~O. Anderson, J.~Delvenne, and V.~Blondel, ``Directed graphs
  for the analysis of rigidity and persistence in autonomous agent systems,''
  {\em International Journal of Robust and Nonlinear Control}, vol.~17,
  no.~10-11, pp.~960--981, 2007.

\bibitem{Oh2015Automatica}
K.-K. Oh, M.-C. Park, and H.-S. Ahn, ``A survey of multi-agent formation
  control,'' {\em Automatica}, vol.~53, pp.~424--440, March 2015.

\end{thebibliography}
\bibliographystyle{ieeetr}
}
\end{document}